\theoremstyle{definition}
\newtheorem{defn}{Definition}
\newtheorem{prop}{Proposition}
\begin{document}
%

\title{Let’s Trade in The Future! A Futures-Enabled Fast Resource Trading Mechanism in Edge Computing-Assisted UAV Networks}
%
%
%

\author{Minghui Liwang, \IEEEmembership{Member, IEEE}, Zhibin Gao, \IEEEmembership{Member, IEEE}, and Xianbin Wang, \IEEEmembership{Fellow, IEEE} 

%


}



%



\maketitle
\pagestyle{empty}  
\thispagestyle{empty} 

\begin{abstract}
Mobile edge computing (MEC) has emerged as one of the key technical aspects of the fifth-generation (5G) networks. The integration of MEC with resource-constrained unmanned aerial vehicles (UAVs) could enable flexible resource provisioning for supporting dynamic and computation-intensive UAV applications. Existing resource trading could facilitate this paradigm with proper incentives, which, however, may often incur unexpected negotiation latency and energy consumption, trading failures and unfair pricing, due to the unpredictable nature of the resource trading process. Motivated by these challenges, an efficient futures-based resource trading mechanism for edge computing-assisted UAV network is proposed, where a mutually beneficial and risk-tolerable forward contract is devised to promote resource trading between an MEC server (seller) and a UAV (buyer). Two key problems i.e. futures contract design before trading and power optimization during trading are studied. By analyzing historical statistics associated with future resource supply, demand, and air-to-ground communication quality, the contract design is formulated as a multi-objective optimization problem aiming to maximize both the seller’s and the buyer’s expected utilities, while estimating their acceptable risk tolerance. Accordingly, we propose an efficient bilateral negotiation scheme to help players reach a trading consensus on the amount of resources and the relevant price. For the power optimization problem, we develop a practical algorithm that enables the buyer to determine its optimal transmission power via convex optimization techniques. Comprehensive simulations demonstrate that the proposed mechanism offers both players considerable utilities, while outperforming the onsite trading mechanism on trading failures and fairness, negotiation latency, and cost.
\end{abstract}
\begin{IEEEkeywords}
Mobile edge computing, 5G, unmanned aerial vehicle, futures-based resource trading, power optimization.
\end{IEEEkeywords}

\section{Introduction}
\noindent
\IEEEPARstart{D}{ue} to their deployment flexibility and cost-effectiveness, unmanned aerial vehicles (UAVs) have been widely deployed to support both public and civilian missions~\cite{1,2}, creating an exponential growing market exceeding \$125 billion worldwide~\cite{3}. The evolution of wireless technologies and the recent advances in the fifth generation (5G) networks have inspired a wide range of UAV applications, such as transportation management, disaster relief and rescue, UAV mounted base station, and smart surveillance~\cite{3,4}. Additionally, UAVs have realized significant values in recent blockbuster incidents. For example, UAVs enabled timely search around the helicopter crash involving basketball legend Kobe Bryant in 2020~\cite{5}. In an effort to slow the spread of COVID-19, UAVs have been employed to monitor public spaces and enforce social distancing rules all over the world~\cite{6}. 

However, many of the abovementioned UAV applications are generally computation-intensive and require complicated onboard processing algorithms and calculations, which pose significant challenges to UAVs with limited computing resources and capabilities. Additionally, the limited power supply of UAV poses another major difficulty for real-time data processing, networking, and decision-making, where performing intensive computations on a single UAV can bring adverse effects on both application completion and UAV's battery lifetime~\cite{7}. To address these issues, mobile edge computing (MEC), which emerges as one of the key technologies of 5G, has become a viable solution for UAV-enabled applications~\cite{8}, which brings the cloud capacity to the edge network, thus offers flexible and cost-effective computing services for resource constrained UAVs.

MEC-enabled resource provisioning to UAVs often relies on certain form of resource trading, where a UAV as the resource requestor with a heavy computational workload can offload a certain amount of tasks (applications) to the MEC server as the resource owner via a feasible access point (AP) or base station (BS) through air-to-ground (A2G) communication, while paying for the requested resources and computing service. To facilitate this resource provisioning paradigm with proper incentives, conventional onsite trading~\cite{9,10} has been widely adopted to enable the resource buying and selling among resource owners and requestors (collectively known as players). In particular, onsite players can reach a consensus on trading terms such as the resource amount and relevant price, based on the current resource supply, demand, and other network/platform/application-related conditions. Nevertheless, onsite trading may lead to undesirable performance degradation mainly incurred by the unpredictable nature of the trading process. First, the amount of time and cost (e.g., power and battery consumption) that onsite players have to devote for reaching a trading consensus are unpredictable, potentially resulting in unsatisfactory quality of service (QoS) and user experiences. Second, onsite players may face trading failures due to unpredictable factors such as varying wireless channel conditions. Additionally, onsite trading could be unfair, as evidenced by changing prices owing to the random nature of the trading process (e.g., unpredictable resource supply and wireless network condition). Major challenges when employing onsite resource trading mechanism in UAV networks are summarized as follows:

%
%
%

\noindent
$\bullet$ \textit{Latency of negotiation}: To reach a trading consensus, onsite negotiation between resource owner and requestor can result in excessive latency, which may dramatically reduce the net usable time for resource sharing under dynamic resource availabilities. Factors such as the delay-sensitive nature of real-time UAV applications, limited A2G connection duration, and varying network conditions (e.g., variable channel quality and changing topology of the UAV network) can substantially hinder the negotiation among players. Timely resource provisioning thus remains a major challenge under dynamic network conditions.

\noindent
$\bullet$ \textit{Energy consumption during negotiation}: An onsite player with limited battery lifetime and power supply may suffer from extra energy consumption to reach a trading consensus. The design of an energy-efficient trading mechanism is thus considered critical. 

\noindent
$\bullet$ \textit{Trading unfairness}: Unfairness is generally reflected by fluctuating prices (e.g., larger fluctuations exacerbate unfairness) due to random availability of resource supply and the greedy nature of resource owners. Therefore, mitigating undesirable trading unfairness under dynamic resource conditions represents another noteworthy challenge.

\noindent
$\bullet$ \textit{Trading failure}: Trading failures occur when players fail to reach a consensus (e.g., when trading leads to negative utility for either player), which inevitably prevent timely provisioning of expected resource sharing to UAVs and lead to an unsatisfactory user experience. Therefore, reducing potential trading failures is essential to support reliable and robust edge computing for future UAV applications.

To address the aforementioned challenges, we propose an efficient futures-based resource trading mechanism within an edge computing-assisted UAV (EC-UAV) network architecture. Specifically, \textit{futures} refers to a forward contract under which different players agree to buy or sell a certain number of commodities at a predetermined price in the future~\cite{10,11}. This arrangement greatly facilitates trading fairness (e.g., smooth pricing) and trading efficiency (e.g., low negotiation latency and energy/battery consumption) while reducing trading failures~\cite{12}. In this paper, we consider an MEC server (resource seller) and a UAV (resource buyer) as the two resource trading players. The proposed resource trading mechanism in this paper is formulated to solve the following key problems: 1) Forward resource trading contract design. This problem focuses on how to enable the seller and buyer to negotiate a consensus on contract term of the amount of trading resources and relevant unit price, by maximizing their expected utilities while evaluating the risk of possible losses; 2) Transmission power optimization. To achieve this goal, a practical algorithm is introduced that enables the buyer to determine its optimal transmission power during each trading. 
\vspace{-0.5cm}

\subsection{Related Work}

\noindent
$\bullet$ \textit{Task offloading and resource allocation in UAV networks}:
Several studies have focused on the task offloading and resource allocation problems in UAV networks, which can be roughly divided into two types based on the role that the UAV plays: 1) UAV serves as resource owner~\cite{13,14,15,16}, and 2) UAV acts as resource requestor~\cite{1,17,18,19}. In terms of UAVs possessing rather powerful capabilities,~\cite{13,14,15,16} studied the task offloading problem under a UAV-assisted edge computing network architecture where UAVs acted as computing servers that assisted on-ground mobile devices with computation-intensive tasks. With respect to UAVs as computing service requestors, Bai \textit{et al}.~\cite{1} devised an energy-efficient computation offloading technique for UAV-MEC systems with an emphasis on physical-layer security via convex optimization techniques. In~\cite{17}, Messous \textit{et al}. investigated the computation offloading problem in an EC-UAV network by establishing a non-cooperative theoretical game involving multiple players and three pure strategies. Messous \textit{et al}.~\cite{18} developed a sequential game containing three player types (i.e., UAV, base station, and MEC server) and confirmed the existence of a Nash equilibrium. Liwang \textit{et al}.~\cite{19} introduced a vehicular cloud-assisted graph task allocation problem in software-defined air-ground integrated vehicular networks. Tasks carried by UAVs were modeled as graphs and mapped to on-ground vehicles; moreover, a novel decoupled approach was proposed to solve the task and power allocation problem.

\noindent
$\bullet$ \textit{Resource trading mechanism}:
Research devoted to resource trading falls into two categories: 1) onsite trading, where players reach an agreement on trading terms according to the current conditions (e.g., present resource demand, supply, and channel quality), such as online games~\cite{17,18,20,21} and auctions~\cite{22,23,24}; and 2) futures-based trading, where players determine a forward contract over buying or selling a certain amount of resources at a reasonable price in the future~\cite{25, 26, 27, 28, 29, 30, 31, 32, 33, 34}. In~\cite{20}, Liwang \textit{et al}. studied the opportunistic computation offloading problem among moving vehicles by modeling two Stackelberg games under complete and incomplete information environments. A multi-user non-cooperative offloading game was investigated by Wang \textit{et al}.~\cite{21}, intended to maximize the utility of each vehicle via a distributed best response algorithm. Consider auction-based onsite trading, the VM allocation among edge clouds and mobile users was formulated as an $n$-to-one weighted bipartite graph matching problem by Gao \textit{et al}.~\cite{22}, based on a greedy approximation algorithm. In~\cite{23}, Liwang \textit{et al}. studied a Vickrey--Clarke--Groves-based reverse auction mechanism involving resource trading among vehicles and suggested a unilateral matching-based mechanism. Gao \textit{et al}.~\cite{24} developed a truthful auction for the computing resource trading market considering graph tasks. However, the random nature of networks usually brings onsite players undesirable trading failures and unfairness, which may lead to unsatisfactory QoS and user experiences. Additionally, the procedure to reach a trade-related consensus may result in heavy latency and energy consumption, which further complicate the trading mechanism design. In general, few studies have investigated the cost that players have to pay prior to each trading.

As a result, futures-based trading is emerged as a practical paradigm and has been extensively~adopted in financial and commodity exchange markets~\cite{10,11}, which greatly diminishes the number of trading failures, price fluctuations, and the negotiation latency and energy consumption before each trading. Studies associated with futures-based resource trading have mainly investigated the electricity market~\cite{25, 26, 27, 28, 29, 30}, spectrum trading~\cite{31, 32, 33}, and grid computing~\cite{34}. Khatib \textit{et al}.~\cite{25} proposed a systematic negotiation scheme through which a generator and load reached a mutually beneficial and risk-tolerable forward bilateral contract in mixed pool/bilateral electricity markets. Conejo \textit{et al}.~\cite{26} addressed a power producer's optimal involvement in a futures electricity market to hedge against the risk of pool price volatility. In~\cite{27}, Morales \textit{et al}. investigated scenario reduction techniques to accurately convey the uncertainties in futures market trading in electricity markets. Wang \textit{et al}.~\cite{28} introduced optimal dynamic hedging of electricity futures using copula-garch models. In~\cite{29}, Algarvio~\textit{et al}. addressed the challenge of maximizing power producers' profits (or returns) by optimizing their customer share in pool and forward markets. Mosquera-L\'{o}pez \textit{et al}.~\cite{30} assessed the drivers of electricity prices in spot and futures markets by considering German electricity markets.

The dynamic nature of wireless communication and scarcity of spectrum resources have necessitated the application of futures-enabled radio spectrum trading. Li \textit{et al}.~\cite{31} investigated a futures market in an effort to manage financial risk and discover future spectrum pricing. In~\cite{32}, Gao \textit{et al}. studied a hybrid secondary spectrum market involving the futures and spot market, where buyers could purchase part of the underutilized licensed spectrum from a spectrum regulator through either predefined contracts or spot transactions. Sheng \textit{et al}.~\cite{33} proposed a futures-based spectrum trading mechanism to alleviate the risk of trading failures and unfairness. When evaluating resource trading in grid computing, Vanmechelen \textit{et al}.~\cite{34} proposed a hybrid market combining the spot and futures markets to achieve economical grid resource management. 

Nevertheless, futures-based trading has rarely been investigated in edge computing-assisted networks. To the best of our knowledge, this work is among the first to consider the design of a trading mechanism intended for future computing resources in EC-UAV networks. 

\vspace{-0.5cm}
\subsection{Novelty and Contribution}
\noindent
In this paper, futures is applied to facilitate the fast, and mutually beneficial resource trading between an MEC server (seller) and a UAV (buyer), under the EC-UAV network architecture. Our major contributions are summarized as follows:

\noindent
$\bullet$ To relieve the unexpected negotiation latency, cost, trading failures, and unfair prices, a novel futures-based resource trading mechanism under EC-UAV network architecture is proposed, which enables the players to negotiate a mutually beneficial and risk-tolerable forward contract of the amount of trading resources and the relevant price, which will be fulfilled during each trading in the future.

\noindent
$\bullet$ To tackle the unpredictable nature of the resource trading system, three main uncertainties are considered: the varying number of local users of the seller (resource supply), the buyer's task arrival rate (resource demand), and ever-changing A2G channel qualities. We correspondingly formulate the seller's utility as the weighted sum of local revenue, trading income, and possible refunds incurred by selling superabundant resources. Additionally, we define the buyer's utility as the trade-off among the saved task completion time from enjoying the computing service, the payment for the requested resources, and the energy consumption when offloading certain tasks to the seller via A2G channel. More importantly, we evaluate the risk facing both players to mitigate potential losses by analyzing historical uncertainty-related statistics.

\noindent
$\bullet$ The proposed resource trading mechanism is considered upon addressing two key problems: the forward contract design problem (before trading) and the buyer's transmission power optimization problem (during each trading). The former is formulated as a multi-objective optimization (MOO) problem where the seller and buyer each aims to maximize the expected utility, by analyzing the historical statistics of resource supply, demand, and A2G channel quality, while estimating the tolerable risk. To tackle this problem, we propose an efficient bilateral negotiation scheme that facilitates the players reaching a trading consensus. For the latter problem which involves maximizing the buyer's utility when fulfilling the contract during each future trading, we present a practical optimization algorithm which enables the buyer to obtain the optimal transmission power via convex optimization techniques.

\noindent
$\bullet$ Comprehensive simulation results and performance evaluation demonstrate that the proposed futures-based resource trading mechanism offers mutually beneficial utilities to both players, while outperforming the onsite trading mechanism on critical indicators including trading fairness and failures, negotiation latency and cost.

The rest of this paper is organized as follows. In Section II,
we introduce the system model and the problem overview. The problem formulation and the relevant solution of the proposed futures-based resource trading mechanism are detailed in Section III and Section IV, respectively. Numerical results are analyzed in Section V before drawing the conclusion in Section VI.

\section{System Model and Problem Overview}

\subsection{Proposed Framework and Problem Overview}
\noindent
The proposed futures-based resource trading mechanism in EC-UAV networks primarily consists of an MEC server (seller) as the resource owner and a UAV (buyer) as the resource requestor. The seller owns a collection of computing resources that can run a certain amount of tasks in parallel. For analytical simplicity, we use virtual machine (VM)-based representation to quantize available resources, denoted by $V$ (e.g., $V=12 $ in Fig.~1). A local user of the seller is seen as a user who paid for the membership of the computing service, while each local user has a computation-intensive task to be processed during a trading. Notably, the number of local users is denoted by $n_l$, which follows a discrete uniform distribution\footnote{The discrete uniform distribution is denoted by ${\rm U}^{\rm d}()$ in this paper.} $n_l\sim {\rm U}^{\rm d}(0, 1, 2, \cdots, M)$, and $M>V$. The buyer features limited computing capability, resources, and power supply, which requires to process a number of tasks in each trading. Consequently, the buyer can generally offload a certain number of tasks to the seller for execution by paying for the relevant computing service. Suppose that the buyer's task\footnote{In this paper, we assume that both local users and the buyer's tasks randomly arrive~\cite{35,36,37} in the system and obey a discrete uniform distribution to convey uncertainties in the resource trading environment. Specifically, $n_b\neq 0$ indicates that the buyer always has task(s) during each trading, and $N$ denotes the relevant maximum number of tasks.} arrives at rate $n_b $ in each trading, which obeys the discrete uniform distribution $n_b\sim {\rm U}^{\rm d}(1, 2, \cdots, N)$, where $N>V$. 

This paper investigates a novel futures-based resource trading mechanism where the players determine the amount of trading resources $\mathcal{A}$ and the relevant unit price $\mathcal{P}$ in~advance~via a mutually beneficial forward contract, relying on historical statistics\footnote{The statistics are assumed to be known based on the historical records of each player~\cite{33}. However, information privacy exists in this system; for instance, the seller is unaware of the distributions of $ n_b$ and $\gamma$ , while the buyer is unaware of~information related to $n_l$, $p_l$, and $r_l$.} associated with resource supply (e.g., $n_l $), resource demand (e.g., $n_b $), and network conditions (e.g., A2G channel quality), and each future trading can be fulfilled accordingly. Moreover, in an effort to ensure greater utility for the buyer during each trading, this paper also studies a power optimization problem to facilitate an energy-efficient resource trading system. Fig. 1 illustrates the relevant framework, along with several examples of the proposed futures-based resource trading mechanism. Specifically, the timeline is divided into two segments: before trading and future trading, where the players negotiate a forward contract in advance before trading; based on which, each future trading can be implemented efficiently. 

\begin{figure}[h!t]
\centerline{\includegraphics[width=1\linewidth]{}}
\vspace*{-.6em}
\caption{The proposed futures-based resource trading framework, and relevant examples ($V=12$, $\mathcal{A}^{*}=6$). The two players negotiate a forward contract before trading, which will be fulfilled in the future. In trading 1, $n_l=5$, $n_b=6$, ${[\mathcal{A}^{*}, n_b]}^-=6$, $C^s=0$; in trading 2, $n_l=8$, $n_b=8$, ${[\mathcal{A}^{*}, n_b]}^-=6$, $C^s=2r_l $; in trading 3, $n_l=14$, $n_b=4$, ${[\mathcal{A}^{*}, n_b]}^-=4$, $C^s=6r_l $.}
\label{fig1}
\end{figure}


\vspace{-0.5cm}
\subsection{Modeling of the Seller}
\noindent
Suppose that the seller can only accept a trade within its capability ($\mathcal{A}\le V$). In this section, the seller's utility, expected utility, and risk are described in detail.

\noindent
$\bullet$ \textit{Seller's utility}: The seller's utility consists of three components: 1) the local revenue $U^s=n_l\times p_l$, where $n_l $ and $p_l $ denote the number and unit revenue of local users, respectively; 2) the income $\mathcal{A}\times \mathcal{P}$ obtained from trading; and 3) the cost $C^s$ incurred by resource trading. Specifically, the cost $C^s$ is considered as the total refund for local users who have to wait for the release of occupied VMs caused by trading\footnote{In this paper, we only consider the waiting cost (refund) incurred by resource trading for the seller's utility. For example, when $n_l>V$, $n_l-V+\mathcal{A}$ tasks have to wait for the release of VMs, but only $\mathcal{A}$ of them are caused by resource trading with the buyer. For example, the seller pays no refund to the tasks in purple in trading 3, Fig.~1.} (e.g., tasks in grey in trading 2 and trading 3 of Fig.~1), defined as (1). 
\begin{align}
\label{eq1}
C^s=
\begin{cases}
0, & 0\le n_l\le V-\mathcal{A} \vspace{-2ex} \\ 
r_l(n_l-(V-\mathcal{A})), & V-\mathcal{A}<n_l\le V \vspace{-2ex} \\
r_l\mathcal{A}, & V+1\le n_l\le M 
\end{cases}, 
\end{align}
where $r_l $ ($r_l\le p_l $) denotes the refund to a local user when the remaining VMs after resource trading fail to meet current local task requirements. Consequently, the seller's utility is given as: 
\begin{align}
\label{eq2}
\mathcal{U}^s(n_l, \mathcal{A}, \mathcal{P})=U^s+\mathcal{A}\times \mathcal{P}-C^s.
\end{align}

\noindent
$\bullet$ \textit{Seller's expected utility}: The unpredictable number of local users poses challenges to the seller in terms of guaranteeing utility when fulfilling the forward contract in the future. Thus, we calculate the seller's expected utility $\overline{\mathcal{U}^s}(n_l, \mathcal{A}, \mathcal{P})$ as (3), where the relevant derivation is given by the Appendix. 
\begin{align}
\label{eq3}
& \overline{\mathcal{U}^s}(n_l, \mathcal{A}, \mathcal{P})=\frac{-r_l\mathcal{A}^2}{2(M+1)}+\left(\mathcal{P}-\frac{(r_l+2Mr_l-2Vr_l)}{2(M+1)}\right)\mathcal{A}+\frac{p_lM}{2}
\end{align}


\noindent
$\bullet$ \textit{Seller's risk}:
In the proposed resource trading system, risk is largely derived from the randomness of resource supply, resource demand, and A2G channel quality. To relieve the possible risks of loss during each future trading, we define the seller's risk as the probability that its utility $\mathcal{U}^s(n_l, \mathcal{A}, \mathcal{P})$ is smaller than or close to the expectation $\overline{\mathcal{U}^s}(n_l, \mathcal{A}, \mathcal{P})$ as given in (4), where $\lambda^s_1 $ is a threshold approaching 1. Namely, the seller always prefers a higher utility that exceeds its expectation in each trading. 
\begin{align}
\label{eq4}
\mathcal{R}^s(n_l, \mathcal{A}, \mathcal{P})=\Pr \{\mathcal{U}^s(n_l, \mathcal{A}, \mathcal{P})\le \lambda^s_1\times \overline{\mathcal{U}^s}(n_l, \mathcal{A}, \mathcal{P})\} 
\end{align}
By combining (1), (2), and (3), we rewrite (4) as (5).
\begin{align}
\label{eq5}
& \mathcal{R}^s(n_l, \mathcal{A}, \mathcal{P})=\Pr \bigg\{S\le \lambda^s_1\bigg(\frac{-r_l\mathcal{A}^2}{2(M+1)}+\bigg(\mathcal{P}-\frac{(r_l+2Mr_l-2Vr_l)}{2(M+1)}\bigg)\mathcal{A}+\frac{p_lM}{2}\bigg)-\mathcal{A}\mathcal{P}\bigg\},
\end{align}
where $S=U^s-C^s$ denotes a discrete random variable for notational simplicity, representing the left-hand side of ``$\le $'', and $\mathbb{C}_1=\lambda^s_1\times \left(-\frac{r_l}{2(M+1)}\mathcal{A}^2+\left(\mathcal{P}-\frac{(r_l+2Mr_l-2Vr_l)}{2(M+1)}\right)\mathcal{A}+\frac{p_lM}{2}\right)-\mathcal{A}\mathcal{P}$ represents the right-hand side of ``$\le $'' in (5). Correspondingly, $S$ is considered in (6).
\begin{align}
\label{eq6}
&S=
\begin{cases}
n_lp_l, & 0\le n_l\le V-\mathcal{A}  \vspace{-2ex} \\ 
n_lp_l-{n_lr}_l+r_l(V-\mathcal{A}), & V-\mathcal{A}<n_l\le V \vspace{-2ex} \\ 
n_lp_l-r_l\mathcal{A}, & V+1\le n_l\le M 
\end{cases}
\end{align}
Through analyzing the probability mass function (PMF) of $S$, the seller's risk can be recalculated by (7), where $\left\lfloor \cdot \right\rfloor $ denotes the rounded down operation (see detailed derivation in the Appendix).


 \vspace*{-0.8\baselineskip}  
\begin{align}
\label{eq7}
& \mathcal{R}^s(n_l, \mathcal{A}, \mathcal{P})
=
\begin{small}
\begin{cases}
0, & \mathbb{C}_1<0 \vspace{-0.6ex} \\ 
\dfrac{\left\lfloor \dfrac{\mathbb{C}_1}{p_l}\right\rfloor +1}{M+1}, & \makecell[l]{0\le \mathbb{C}_1<(V-\mathcal{A}) p_l +p_l-r_l} \vspace{-0.6ex} \\ 
\dfrac{V-\mathcal{A}+1}{M+1}+\dfrac{\left\lfloor \dfrac{\mathbb{C}_1-(V-\mathcal{A}) p_l}{p_l-r_l}\right\rfloor} {M+1}, & \makecell[l]{(V-\mathcal{A}) p_l+p_l-r_l\le \mathbb{C}_1  <(V+1) p_l-r_l\mathcal{A}} \vspace{-0.6ex} \\[3mm] 
\dfrac{V+1}{M+1}+\dfrac{\left\lfloor \dfrac{\mathbb{C}_1+r_l\mathcal{A}}{p_l}\right\rfloor} {M+1}, & \makecell[l]{(V+1) p_l-r_l\mathcal{A}\le \mathbb{C}_1\le Mp_l-r_l\mathcal{A}} \vspace{-0.6ex} \\ 
1, & \mathbb{C}_1>Mp_l-r_l\mathcal{A} 
\end{cases}
\end{small}
\end{align}

\vspace{-0.5cm}
\subsection{Modeling of the Buyer}
\noindent
In this paper, consider a buyer that faces difficulties in processing computation-intensive tasks locally due to insufficient resources, limited capability, and battery lifetime. 

\noindent
$\bullet$ \textit{Buyer's utility}:
The buyer's utility $\mathcal{U}^b$ is defined as the benefit obtained from resource trading, which mainly involves three features: 1) the task completion time $U^b$ saved from the computing service; 2) the relevant payment $\mathcal{A}\times \mathcal{P}$; and 3) the energy consumption $E^b$ incurred by offloading tasks from the buyer to the seller via A2G communication. Specifically, $U^b$ represents the difference between the completion time of a certain number of tasks by local computing (by the UAV itself) and edge cloud computing (by the seller), shown as (8):
\begin{align}
\label{eq8}
U^b={[\mathcal{A}, n_b]}^-\times \tau ^b-\left(\tau ^s+\frac{{[\mathcal{A}, n_b]}^-\times D}{W\log _2(1+q\gamma)}\right), 
\end{align}
where ${[\mathcal{A}, n_b]}^-$ refers to the smaller value between $\mathcal{A}$ and $n_b$, describing the actual number of tasks that can be offloaded to the seller (e.g., the actual number of offloaded tasks is calculated as ${[8, 6]}^-=6$ in trading 2, Fig.~1); and $\tau ^s$ and $\tau ^b$ denote the execution time of a task processed by the seller and the buyer, respectively\footnote{In the proposed EC-UAV network, the seller provides parallel processing service given multiple VMs, while the buyer works in a serial processing mode. For example, the execution time for 9~tasks is $\tau ^s$ by seller, while that of the buyer is $9\times \tau ^b$.}. The terms ${[\mathcal{A}, n_b]}^-\times \tau ^b$ and $\tau ^s+\frac{{[\mathcal{A}, n_b]}^-\times D}{W\log _2(1+q\gamma)}$ refer to the completion time of ${[\mathcal{A}, n_b]}^-$ tasks via local computing, and edge computing, respectively. Specifically, ${D}/{W\log _2(1+q\gamma)}$ in (8) represents the data transmission delay when offloading a task to the seller, where $D$ is the data size (bits) of each task\footnote{In this paper, assume that all the tasks have the same data size for analytical simplicity, and our proposed mechanism can be implemented effectively when considering different data sizes.}, $W$ indicates the bandwidth of the A2G channel, and $q~(0<q\le q^{max})$ stands for the buyer's transmission power. Moreover, $\gamma ={g_1d^{-\alpha}}/{N_0}$, where $g_1 $ corresponds to the channel gain at the reference distance of 1~meter, $d$ denotes the A2G distance between the buyer and the nearest AP, $\alpha $ indicates the path loss exponent of the line-of-sight path~\cite{38,39}, and $N_0 $ denotes the background noise power. 

In this paper, suppose that the UAV moves randomly in the sky within a certain space. Therefore, let $\gamma $ follow a uniform distribution in interval $[\varepsilon _1, \varepsilon _2]$ given the uncertainty of the wireless communication environment~\cite{33}, denoted by $\gamma \sim {\rm U}(\varepsilon _1, \varepsilon _2)$. Apparently, $q\gamma $ represents the seller's received signal-to-noise ratio (SNR). Hence, the energy consumption $E^b$ is considered as the buyer's extra overhead when transmitting a certain amount of data to the seller via wireless access as shown in (9):
 \vspace*{-0.4\baselineskip}  
\begin{align}
\label{eq9}
E^b=\frac{q\times {[\mathcal{A}, n_b]}^-\times D}{W\log _2(1+q\gamma)}+\ell, 
\end{align}
where $\ell $ indicates the tail energy, given that the UAV will hold the channel for a while even after data transmission. Correspondingly, the buyer's utility $\mathcal{U}^b$ is given in (10), where $\omega _1$ and $\omega _2$ are positive-weight~coefficients.
\vspace*{-0.4\baselineskip}  
\begin{align}
\label{eq10}
\mathcal{U}^b(q, \gamma, n_b, \mathcal{A}, \mathcal{P})=U^b-\omega _1\mathcal{A}\times \mathcal{P}-\omega _2E^b 
\end{align}

\noindent
$\bullet$ \textit{Buyer's expected utility}:
Similar to the seller, we compute the buyer's expected utility by (11) given the distributions of $\gamma $ and $n_b $ (find detailed derivation in the Appendix):
\begin{align}
\label{eq11}
& \overline{\mathcal{U}^b}(q, \gamma, n_b, \mathcal{A}, \mathcal{P})=\frac{-\mathbb{C}_2\mathcal{A}^2}{2N}+\left(\mathbb{C}_2+\frac{\mathbb{C}_2}{2N}-\omega _1\mathcal{P}\right)\mathcal{A}-\tau ^s-\omega _2\ell, 
\end{align}
where $\mathbb{C}_2=\tau ^b-\frac{D+\omega _2qD}{W}\times \frac{\ln 2\times \int^{y_2}_{y_1}{\frac{e^x}{x}}{\rm d}x}{q(\varepsilon _2-\varepsilon _1)}$ denotes a constant under any given $q$, $y_1=\ln 2\times \log _2(q\varepsilon _1+1)$, and $y_2=\ln 2\times \log _2(q\varepsilon _2+1)$ for notational simplicity. 

\noindent
$\bullet$ \textit{Buyer's risk}:
To alleviate heavy on-board workload, the buyer is consistently willing to trade with the seller when $\mathcal{U}^b>0$. Nevertheless, a trading may suffer from poor A2G channel quality due to particular factors (e.g., small $q$ and $\gamma $), which thus leads to unsatisfactory $\mathcal{U}^b$. Therefore, we define the minimum utility $\mathcal{U}^{min}$ of the buyer as a value approaching zero, describing a case in which all the tasks have to be processed locally (e.g., a failed trading). Consequently, the buyer's risk $\mathcal{R}^b(q, \gamma, n_b, \mathcal{A}, \mathcal{P})$ is largely tied to the prediction uncertainty of the randomness of resource demand $n_b $ and channel condition $\gamma$, which is formulated as the probability that $\mathcal{U}^b$ might be too close to its minimum $\mathcal{U}^{min}$ as shown in (12):
\begin{align}
\label{eq12}
\mathcal{R}^b(q, \gamma, n_b, \mathcal{A}, \mathcal{P})=\Pr \left\{\frac{\mathcal{U}^b(q, \gamma, n_b, \mathcal{A}, \mathcal{P})}{\mathcal{U}^{min}}\le \lambda^b_1+1\right\}, 
\end{align}
where $\lambda^b_1 $ represents a threshold coefficient. Upon integrating (8)-(11), (12) is rewritten as (13).
\vspace*{-0.4\baselineskip}  
\begin{align}
\label{eq13}
& \mathcal{R}^b(q, \gamma, n_b, \mathcal{A}, \mathcal{P})=
\Pr\bigg\{{[\mathcal{A}, n_b]}^-\left(\tau ^b-\frac{D+\omega _2qD}{W\log _2(1+q\gamma)}\right)\le (\lambda^b_1+1)\mathcal{U}^{min}+\tau ^s+\omega _1\mathcal{A}\mathcal{P}+\omega _2\ell \bigg\}
\end{align}
For notational simplicity, let $\mathbb{C}_{3}= [\mathcal{A}, n_b]^- $ $\left(\tau ^b-\frac{D+{\omega}_{2}qD}{W\log _{2}(1+q\gamma)}\right)$ and $\mathbb{C}'_{3}=(\lambda^b_{1}+1)\mathcal{U}^{min}+\tau ^s+{\omega}_{1}\mathcal{A}\mathcal{P}+$ ${\omega}_{2}\ell $ indicate the left-hand and right-hand side of ``$\le $'' in (13), respectively. Moreover, let discrete~random~variable $X={[\mathcal{A}, n_b]}^-$ and continuous~random~variable $Z=\tau^b-$ $\frac{D+{\omega}_{2}qD}{W\log _{2}(1+q\gamma)}$; the CDF ${\rm F}_Z(z)$ of $Z$ is calculated as (14):
\begin{align}
\label{eq14}
{\rm F}_Z(z)=
\begin{cases}
0, & z<\mathbb{C}_{4}\vspace{-1ex} \\ 
\dfrac{{2}^{\frac{D+{\omega}_{2}qD}{W(\tau ^b-z)}}-q{\varepsilon}_{1}-1}{{q\varepsilon}_{2}-{q\varepsilon}_{1}}, & \mathbb{C}_{4}\le z\le \mathbb{C}'_{4} \vspace{-1ex} \\ 
1, & z>\mathbb{C}'_{4} 
\end{cases}, 
\end{align}
where $\mathbb{C}_{4}=\tau ^b-\frac{D+{\omega}_{2}qD}{W\log _{2}({q\varepsilon}_{1}+1)}$ and $\mathbb{C}'_{4}=\tau ^b-\frac{D+{\omega}_{2}qD}{W\log _{2}({q\varepsilon}_{2}+1)}$ for notational simplicity. According to (14), we can compute $\mathcal{R}^b(q, \gamma, n_b, \mathcal{A}, \mathcal{P})$ when $\mathcal{A}=1$ as (15):
\begin{align}
\label{eq15}
& \mathcal{R}^b(q, \gamma, n_b, \mathcal{A}=1, \mathcal{P})={\rm F}_Z(\mathbb{C}'_{3})=
\begin{cases}
0, & \mathbb{C}'_{3}<\mathbb{C}_{4} \vspace{-0.6ex} \\ 
\dfrac{{2}^{\frac{D+{\omega}_{2}qD}{W(\tau ^b-\mathbb{C}'_{3})}}-q{\varepsilon}_{1}-1}{{q\varepsilon}_{2}-{q\varepsilon}_{1}}, & \mathbb{C}_{4}\le \mathbb{C}'_{3}\le \mathbb{C}'_{4} \vspace{-0.6ex} \\
1, & \mathbb{C}'_{3}>\mathbb{C}'_{4} 
\end{cases},
\end{align}
and (16) when $\mathcal{A}>1$, respectively. The derivations of (14)--(16) are detailed in the Appendix.
\label{eq16}
\begin{align}
& \mathcal{R}^b(q, \gamma, n_b, \mathcal{A}>1, \mathcal{P})=\Pr \{XZ\le \mathbb{C}'_{3}\}
=\frac{1}{N}\sum^{x=\mathcal{A}-1}_{x=1}{{\rm F}_Z\left(\frac{\mathbb{C}'_{3}}{x}\right)}+\frac{N-\mathcal{A}+1}{N}{\rm F}_Z\left(\frac{\mathbb{C}'_{3}}{\mathcal{A}}\right)
\end{align}
\subsection{Contract Term}
\noindent
Contract term represents a key concept in this paper, which determines the basis of all the future resource trading.

\begin{defn}[Final contract term]
\label{defn1}
The final contract term is denoted by $\{\mathcal{A}^{*}, \mathcal{P}^{*}\}$, referring to the buyer's and seller's final consensus on the forward contract. In a futures-based resource trading environment, each trading should be implemented based on the final contract term.
\end{defn}

\begin{defn}[Candidate contract term]
\label{defn2}
A candidate contract term refers to a pair of $\mathcal{A}$ and $\mathcal{P}$ that is accepted by the seller and buyer. The final contract term $\{\mathcal{A}^{*}, \mathcal{P}^{*}\}$ will be chosen from the set of candidate contract terms.
\end{defn}

\section{Problem Formulation}
\noindent
Note that contract period is beyond the scope of this paper\footnote{We do not consider the contract period in this paper; essentially, the two players can negotiate another forward contract when the previous one is about to expire.}. The proposed futures-based resource trading aims to solve two key problems: contract design and transmission power optimization. First, the seller and buyer determine a forward contract on resource amount and unit price by maximizing their expected utility, which is formulated as a MOO problem $\bm{\mathcal{F}_1}$ given in (17): 
\begin{align}
\centering
\label{eq17}
&\bm{\mathcal{F}_1}: 
\begin{cases}
\{\mathcal{A}^{*}, \mathcal{P}^{*}\}=\mathop{\arg\max}\limits_{\mathcal{A}, \mathcal{P}} {\overline{~\mathcal{U}^s}(n_l, \mathcal{A}, \mathcal{P})} \vspace{-1ex} \\[3mm]
\{\mathcal{A}^{*}, \mathcal{P}^{*}\}=\mathop{\arg\max}\limits_{\mathcal{A}, \mathcal{P}} \overline{\mathcal{~U}^b}(q^*, \gamma, n_b, \mathcal{A}, \mathcal{P}) \vspace{-1ex} \\[3mm]
q^*=\mathop{\arg\max}\limits_{q} \overline{~\mathcal{U}^b}(q, \gamma, n_b, \mathcal{A}, \mathcal{P}) 
\end{cases}\\
&\text{s.t.}\notag \\
& C1: 1\le \mathcal{A}\le V,  \notag \\
& C2: p^{min}\le \mathcal{P}\le p^{max}, \vspace{-2ex}  \notag \\
& C3: 0<q\le q^{max}, \notag \\
& C4: \mathcal{R}^s(n_l, \mathcal{A}, \mathcal{P})\le \lambda^s_2 , \notag \\
& C5: \mathcal{R}^b(q^*, \gamma, n_b, \mathcal{A}, \mathcal{P})\le \lambda^b_2 ,\notag 
\end{align}
where $q^*$ stands for the buyer's feasible transmission power while discussing the final contract term with the seller. Constraints $C1$, $C2$, and $C3$ limit the amount of trading resources and price as well as the UAV's transmission power, respectively. Notably, $p^{max}=p^{min}+\kappa \Delta p, $ where $\kappa $ is a positive integer and $\Delta p$ stands for price granularity. Constraints $C4$ and $C5$ respectively represent the seller's and the buyer's acceptable risk tolerance. It is difficult to solve $\bm{\mathcal{F}_1}$ directly owing to that $\mathcal{A}^*, \mathcal{P}^*$, and the transmission power $q^*$ are coupled with each other, and all need to be optimized. Moreover, the buyer's expected utility given in (11) represents a monotonic increasing function of $\mathbb{C}_{2}$ under any given $\mathcal{A}$ and $\mathcal{P}$, where the exponential integral operation complicates determination of the convexity of $\mathbb{C}_{2}$ on various values of $q$. 

Let $\bm{T}=\{t_1, t_2, \cdots, t_i, \cdots, t_{|\bm{T}|}\}$ denote the set of trading index in the future; ${n_l}^{(t_i)}$, ${n_b}^{(t_i)}$, and ${\gamma}^{(t_i)}$ indicate the number of local users, the number of buyer's tasks, and the A2G channel quality in trading $t_i$, respectively. Thus, the transmission power optimization problem is formulated as $\bm{\mathcal{F}_2}$ in (18), aiming to maximize the buyer's utility during each trading.
\begin{align}
\label{eq18}
& \bm{\mathcal{F}_2}:q^{**}(t_i)=\mathop{\arg\max}\limits_{q}~ \mathcal{U}^b(q, {\gamma}^{(t_i)}, {n_b}^{(t_i)}, \mathcal{A}^*, \mathcal{P}^*), \forall t_i\in \bm{T} ~~~~~~ \text{s.t.}~~C3,
\end{align}
where $q^{**}(t_i)$ denotes the optimal transmission power during each trading under given ${\gamma}^{(t_i)}$ and ${n_b}^{(t_i)}$, based on the predetermined forward contract. Meanwhile, $ \bm{\mathcal{F}_2}$ represents a non-convex optimization problem that complicates the solution design. To solve the above-mentioned problems, we first propose an efficient bilateral negotiation algorithm to facilitate the players' consensus regarding the final term of the forward contract. Then, we investigate a practical power optimization algorithm, through which, the buyer's utility can be maximized by obtaining the optimal transmission power during each trading. 

\begin{algorithm*}[b!]
\setstretch{0.9} 
{\small
\caption{Forward contract design via bilateral negotiation where buyer determines the final contract term}
\label{alg1}
\SetKwInOut{Input}{Input}\SetKwInOut{Output}{Output}
\SetKwInput{KwResult}{Initialization}

\Input{$r_l$, $ V$, $M$, $ p_l$, $ \lambda^s_1$, $\lambda^s_2$, $N$, $\tau ^b$, $ \tau ^s$, $D$, $W$, $q^{max}$, $\varepsilon _1$, $\varepsilon _2$, $\ell$, $\omega _1$, $\omega _2$, $\mathcal{U}^{min}$, $\lambda^b_1$, $ \lambda^b_2$, $p^{min}$, $p^{max}$, $\Delta p$}

\Output{$ \mathcal{A}^*, \mathcal{P}^*$}

\KwResult{$\bm{\mathcal{C}}\leftarrow \varnothing$, \% the candidate contract term set\newline
$n\leftarrow 1$, \% the quotation round \newline
$\mathcal{P}^n\leftarrow p^{min}$, \% for the first iteration, the seller sets the unit trading price to $p^{\min}$  }

\While{$\mathcal{P}^n\le p^{max}$}{

 the seller decides the acceptable range $\bm{S^n}$ of the amount of trading resources while meeting $C1$ and $C4$,  

\If{$\bm{S^n}=\varnothing $}{

 go to Step 17, \% the seller raises the price and starts another round of quotation  

}

 the buyer decides the acceptable range $\bm{B^n}$ of the amount of trading resources while meeting $C1$ and $C6$,  

\If{$\bm{B^n}=\varnothing $}{

 go to Step 20, \% the seller no longer raises the price from iteration $n$ when $\bm{B^n}=\varnothing $  

}

\If{$\bm{S^n}\bigcap \bm{B^n}\neq \varnothing$}{

 $\mathcal{A}^n\leftarrow \mathop{\arg\max}\limits_{\mathcal{A}} {\overline{\mathcal{U}^s}(n_l, \mathcal{A}, \mathcal{P}^n)}$, $ \mathcal{A}\in \bm{S^n}\bigcap \bm{B^n}$ \textbf{\%} the seller chooses the amount of resources that maximizes the seller's expected utility from set $\bm{S^n}\bigcap \bm{B^n}$  

 $\bm{\mathcal{C}}\leftarrow \bm{\mathcal{C}}\bigcup \{\mathcal{A}^n, \mathcal{P}^n\}$, \% put the candidate contract term into $\bm{\mathcal{C}}$  

\Else{go to Step 20, \% the seller no longer raises the price from iteration $n$ when $\bm{S^n}\bigcap \bm{B^n}=\varnothing $}

}

 $n\leftarrow n+1$,  

 $\mathcal{P}^n\leftarrow \mathcal{P}^{n-1}+\Delta p$, \% the seller raises the price and starts another round of quotation  

} 

\If{$\bm{\mathcal{C}}\neq \varnothing$}{

 $\{\mathcal{A}^*, \mathcal{P}^*\}\leftarrow \mathop{\arg\max}\limits_{\mathcal{A}, \mathcal{P}} {\overline{\mathcal{U}^b}(q^{max}, \gamma, n_b, \mathcal{A}, \mathcal{P})}$, $\{\mathcal{A}, \mathcal{P}\}\in \bm{\mathcal{C}$}, \textbf{\%} the buyer chooses the pair of $\mathcal{A}, \mathcal{P}$ from $\bm{\mathcal{C}}$ that maximizes the buyer's expected utility as the final term for the forward contract  

\Else{the trading fails,  }

}

end negotiation. 
}
\end{algorithm*}

\begin{algorithm}[t!]
\setstretch{0.9} 
{\small
\caption{Forward contract design via bilateral negotiation where seller determines the final contract term}
\label{alg2}
\setcounter{AlgoLine}{8}
$\mathcal{A}^n\leftarrow \mathop{\arg\max}\limits_{\mathcal{A}} {\overline{\mathcal{U}^b}(q^{max}, \gamma, n_b, \mathcal{A}, \mathcal{P}^n)}$, $ \mathcal{A}\in \bm{S^n}\bigcap \bm{B^n}$

\setcounter{AlgoLine}{15}
$\{\mathcal{A}^*, \mathcal{P}^*\}\leftarrow \mathop{\arg\max}\limits_{\mathcal{A}, \mathcal{P}} {\overline{\mathcal{U}^s}(n_l, \mathcal{A}, \mathcal{P})}$, $\{\mathcal{A}, \mathcal{P}\}\in \bm{\mathcal{C}}$
}
\end{algorithm}

\section{Proposed Futures-based Fast Resource Trading Mechanism}
\begin{figure*}[t!]
\centering
\subfigure[]{\includegraphics[width=.325\linewidth]{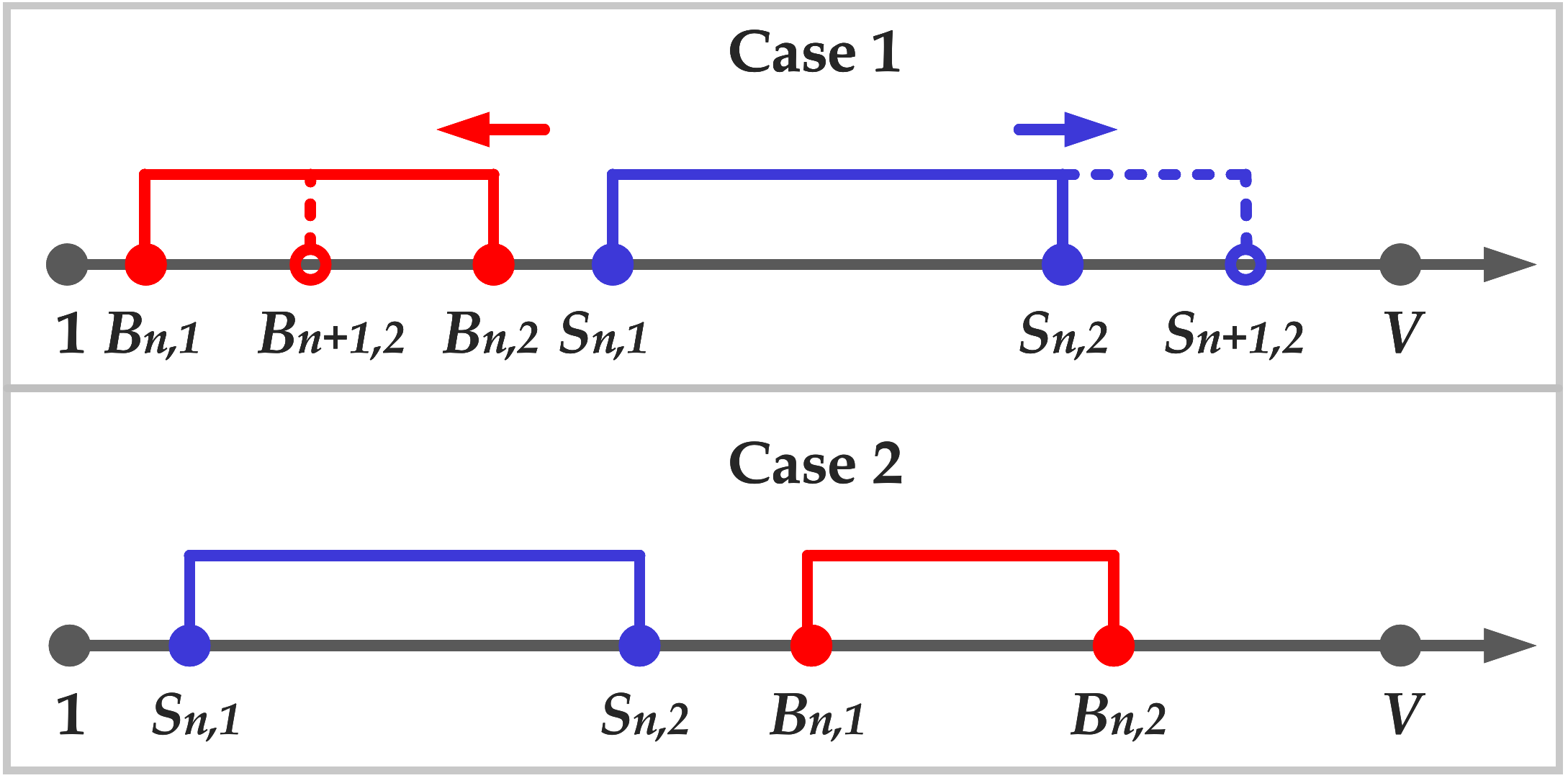}}~
\subfigure[]{\includegraphics[width=.325\linewidth]{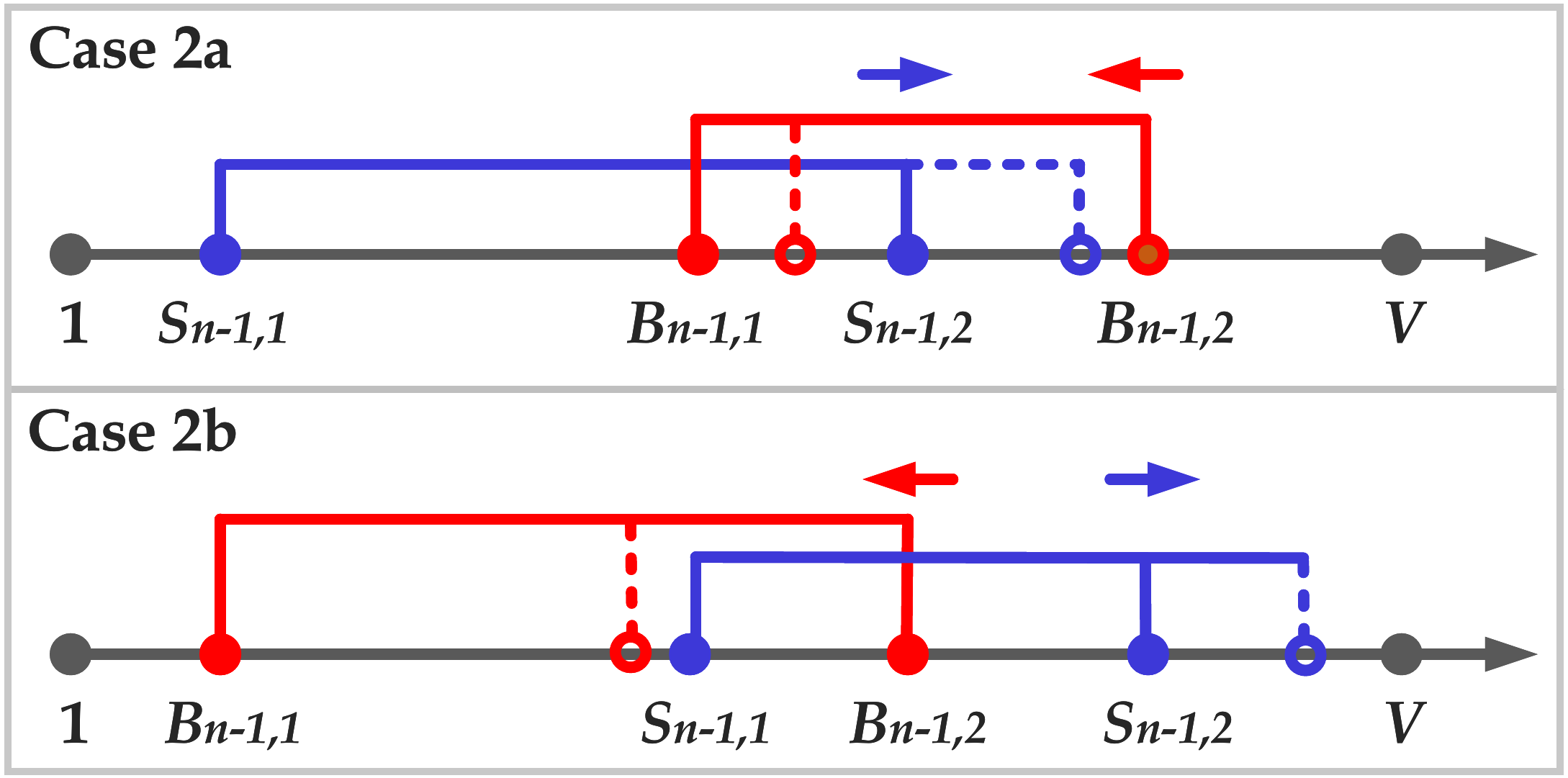}}~
\subfigure[]{\includegraphics[width=.325\linewidth]{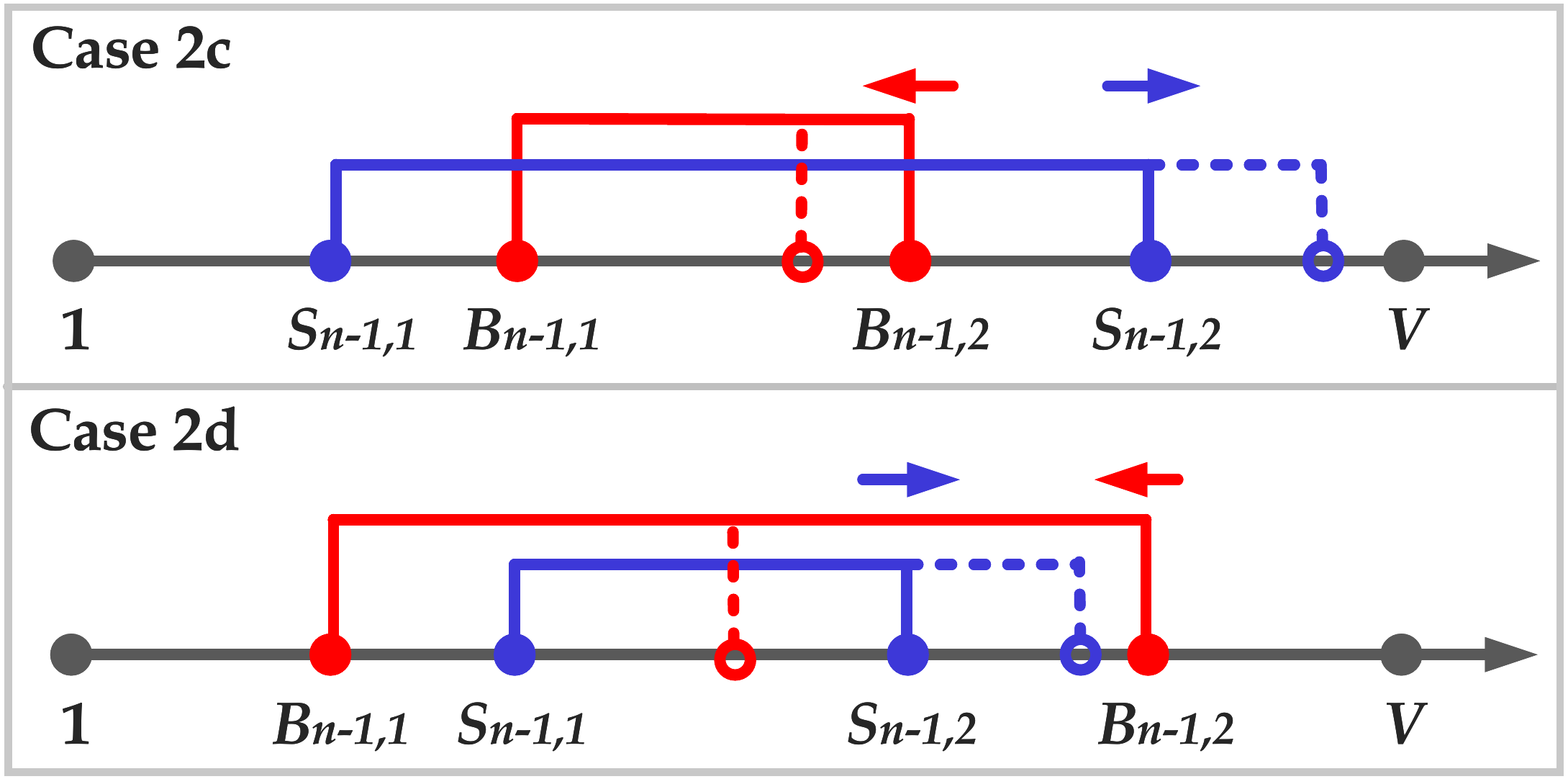}}
\vspace*{-.6em}
\caption{Diagram associated with \textbf{Proposition 1}.}
\label{fig3}
\end{figure*}

\subsection{Forward Contract Design via Bilateral Negotiation}
\noindent
As the uncertainties of $n_b $ and $\gamma $ present challenges when the buyer has to ascertain feasible transmission power during the design of the forward contract, we rewrite the MOO problem $ \bm{\mathcal{F}_1}$ into $\bm{\mathcal{F}_3}$ as given in (19) by letting $q^*=q^{max}$.
\begin{align}
\label{eq19}
&\bm{\mathcal{F}_3}: 
\begin{cases}
\{\mathcal{A}^{*}, \mathcal{P}^{*}\}=\mathop{\arg\max}\limits_{\mathcal{A}, \mathcal{P}} \overline{~\mathcal{U}^s}(n_l, \mathcal{A}, \mathcal{P}) \\ 
\{\mathcal{A}^{*}, \mathcal{P}^{*}\}=\mathop{\arg\max}\limits_{\mathcal{A}, \mathcal{P}} \overline{~\mathcal{U}^b}(q^{max}, \gamma, n_b, \mathcal{A}, \mathcal{P}) 
\end{cases}\\
&\text{s.t.}~~C1, C2, C4, \notag \\
&C6: \mathcal{R}^b(q^{max}, \gamma, n_b, \mathcal{A}, \mathcal{P})\le \lambda^b_2.\notag
\end{align}
Owing to the information privacy, traditional~methods~for solving the MOO problem (e.g., weighted sum method~\cite{40}, weighted metric method~\cite{41}, and multi-objective genetic algorithms~\cite{42}) are difficult to implement in this paper. Consequently, bilateral negotiation is introduced as an efficient approach whereby the two players negotiate the unit price and the amount of resources to be traded under the forward contract in an iterative manner. Algorithm~1 depicts the detailed logic of the proposed negotiation, in which the seller chooses the candidate contract terms (step 9, Algorithm~1), and the buyer determines the final contract term (step 16, Algorithm~1). Specifically, the seller starts with $p^{min}$ for the first quotation round. In the $n^{\text{th}}$ quotation, the seller sets a price $\mathcal{P}^n$ and decides the acceptable range of trading resources denoted by $\bm{S}^{n}$ while meeting constraints $C1$ and $C4$ (step 2, Algorithm~1); if $\bm{S^n}=\varnothing$, the seller directly raises the price and initiates the next round of negotiation (steps 3 and 4, Algorithm~1) because the current price might not meet its risk tolerance. Under a given $\mathcal{P}^n$, the buyer determines an affordable resource-trading range, denoted by $\bm{B^{n}}$, that satisfies constraints $C1$ and $C6$ (step 5, Algorithm~1). The quotation procedure will be ended if $\bm{B^n}=\varnothing $ (steps 6 and 7, Algorithm~1; see proof in \textbf{Proposition 1}). Then, if the two ranges overlap, the seller outlines a candidate contract term by choosing the amount of resources that maximizes the seller's expected utility (steps 9 and 10, Algorithm~1). Notably, to facilitate an efficient negotiation, the seller will no longer quote if $\bm{S^n}\bigcap \bm{B^n}=\varnothing$ (steps 11 and 12, Algorithm~1; see proof in \textbf{Proposition 1}). 

When all the candidate contract terms are settled, the buyer determines the final contract term $\{\mathcal{A}^{*}, \mathcal{P}^{*}\}$ by choosing the one that maximizes the buyer's expected utility (steps 15 and 16, Algorithm~1). Otherwise, if there is no candidate contract term, then futures-based trading fails.

Note that cases in which the final contract term is determined by the seller or the buyer can differ; thus, we introduce Algorithm~2 where all steps are the same as in Algorithm~1 except for steps 9 and 16. Specifically, the buyer chooses the candidate contract terms (step 9, Algorithm~2), while the seller determines the final contract term (step 16, Algorithm~2).

\begin{prop}
\label{prop1}
To facilitate an efficient negotiation, the seller will no longer quote from the $n^{\text{th}}$ round when either of the following conditions applies:

\noindent
Condition 1: $\bm{B^n}=\varnothing$; 

\noindent
Condition 2: $\bm{S^n}\bigcap \bm{B^n}=\varnothing $ ($\bm{S^n}\neq \varnothing$ and $\bm{B^n}\neq \varnothing$).

\noindent
Namely, raising the price by $\mathcal{P}^{n+1}\leftarrow \mathcal{P}^n+\Delta p$ will not elicit any more candidate contract terms.
\end{prop}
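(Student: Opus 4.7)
The plan is to prove both conditions by establishing price-monotonicity of the two acceptance sets and then combining this with a shape characterization of each.

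For Condition~1, I would argue directly from (15)--(16) that $\bm{B^{n+1}}\subseteq \bm{B^n}$. The only $\mathcal{P}$-dependence in the buyer's risk enters through $\mathbb{C}'_3=(\lambda^b_1+1)\mathcal{U}^{min}+\tau^s+\omega_1\mathcal{A}\mathcal{P}+\omega_2\ell$, which is strictly increasing in $\mathcal{P}$ for every fixed $\mathcal{A}\ge 1$. Because $\mathcal{R}^b$ is a non-decreasing composition of the CDF $F_Z$ in $\mathbb{C}'_3$, any $\mathcal{A}$ that violates $C6$ at price $\mathcal{P}^n$ continues to violate $C6$ at $\mathcal{P}^{n+1}=\mathcal{P}^n+\Delta p$. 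Hence an empty $\bm{B^n}$ propagates to every subsequent round and no candidate term can appear, which immediately dispatches Condition~1.

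For Condition~2, I would first characterize the shapes of the two sets. Inspection of (16) shows $\mathcal{R}^b$ is also non-decreasing in $\mathcal{A}$ for fixed $\mathcal{P}$, since raising $\mathcal{A}$ both inflates $\mathbb{C}'_3$ and adjoins non-negative CDF terms to the sum. Thus $\bm{B^n}=\{1,\ldots,b_n\}$ is a lower-interval. A parallel analysis of $\mathbb{C}_1=\lambda^s_1\overline{\mathcal{U}^s}-\mathcal{A}\mathcal{P}$, viewed through the piecewise risk (7) and using $\lambda^s_1\lesssim 1$, yields $\bm{S^n}=\{s_n,\ldots,V\}$ as an upper-interval. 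So the hypothesis $\bm{S^n}\cap\bm{B^n}=\varnothing$ is precisely the gap condition $s_n>b_n$. A symmetric price-monotonicity argument on the seller side, using that the only $\mathcal{P}$-contribution to $\mathbb{C}_1$ is the negative term $(\lambda^s_1-1)\mathcal{A}\mathcal{P}$, gives $\bm{S^n}\subseteq \bm{S^{n+1}}$ and hence $s_{n+1}\le s_n$; combined with Step~1, both frontiers retreat weakly toward smaller $\mathcal{A}$.

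The main obstacle is then to show that the gap persists, i.e. $s_{n+1}-b_{n+1}\ge 1$, since pure monotonicity alone would permit the two frontiers to collide. I would attack this by comparing the per-unit sensitivities of the two risk expressions to $\mathcal{P}$ across the piecewise regions of (7), (15), and (16): the buyer's risk inflates at a rate proportional to $\omega_1 \mathcal{A}$, while the seller's risk relaxes at a rate proportional to $(1-\lambda^s_1)\mathcal{A}$, and because $b_n<s_n$ at the frontier, the buyer's boundary $\mathcal{A}=b_n$ carries a smaller multiplier than the seller's boundary $\mathcal{A}=s_n$. If a clean algebraic bound proves cumbersome, I would close by contradiction: supposing some $\mathcal{A}'\in\bm{S^{n+1}}\cap\bm{B^{n+1}}$, Step~1 forces $\mathcal{A}'\le b_n$, while $\mathcal{A}'\ge s_{n+1}$; the remaining case $s_{n+1}\le \mathcal{A}' \le b_n < s_n$ is then ruled out by the sensitivity comparison just described, contradicting Condition~2. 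I expect Fig.~3 to organize exactly this geometric picture of two shrinking/expanding intervals whose gap cannot be closed by a single price increment.
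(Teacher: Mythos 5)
Your treatment of Condition~1 is fine and coincides with the paper's: the only $\mathcal{P}$-dependence of $\mathcal{R}^b$ is through $\mathbb{C}'_3$, so $\bm{B^{n+1}}\subseteq\bm{B^n}$ and an empty buyer set stays empty. For Condition~2, however, the proposal has a genuine gap at exactly the step that carries the proposition. Your argument reduces disjointness to $s_n>b_n$ with both frontiers moving weakly downward as the price rises, and then everything hinges on showing the gap cannot close; the ``sensitivity comparison'' you offer does not show this. It compares the buyer's rate at $\mathcal{A}=b_n$ with the seller's rate at $\mathcal{A}=s_n$, whereas what must be excluded is that at some common $\mathcal{A}'\le b_n$ the seller's risk falls to $\lambda^s_2$ before the buyer's risk exceeds $\lambda^b_2$; at a common $\mathcal{A}'$ both rates scale with $\mathcal{A}'$, and the outcome depends on the initial slacks $\mathcal{R}^s(\mathcal{A}',\mathcal{P}^n)-\lambda^s_2$ and $\lambda^b_2-\mathcal{R}^b(\mathcal{A}',\mathcal{P}^n)$ and on the local mass of $S$ and density of $Z$ near the thresholds, none of which your multiplier comparison controls (nothing in the model forces $\omega_1$ versus $1-\lambda^s_1$, or these slacks, to line up favorably). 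The fallback ``by contradiction'' re-invokes the same unproven comparison, so it does not close the argument. In addition, the shape claims you feed into this are asserted by inspection but not established: in (16), increasing $\mathcal{A}$ raises $\mathbb{C}'_3$ but also transfers weight from ${\rm F}_Z(\mathbb{C}'_3/\mathcal{A})$ to ${\rm F}_Z(\mathbb{C}'_3/(\mathcal{A}+1))$, a smaller argument, so monotonicity of $\mathcal{R}^b$ in $\mathcal{A}$ (hence the lower-interval form of $\bm{B^n}$) is not obvious, and the upper-interval form $\{s_n,\ldots,V\}$ of $\bm{S^n}$ is never derived from (7).

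The paper avoids the gap-closing difficulty by a different dynamic picture. It also treats $\bm{S^n}$ and $\bm{B^n}$ as intervals, but reads the effect of a price increment (lower $\mathbb{C}_1$, larger $\mathbb{C}'_3$) as pushing the seller's upper endpoint $S_{n,2}$ up and the buyer's upper endpoint $B_{n,2}$ down; it then distinguishes the two possible disjoint orderings: if the seller's range lies above the buyer's ($S_{n,1}>B_{n,2}$), the endpoint movements only widen the separation, and the ordering $S_{n,2}<B_{n,1}$ is ruled out by examining the possible configurations in round $n-1$ (Cases 2a--2d of Fig.~2). In that argument the moving endpoints recede from each other, so there is no persistence inequality to prove, whereas your construction has the seller's frontier retreating toward the buyer's and thereby manufactures precisely the obstacle you cannot resolve. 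To repair your route you would need either a quantitative frontier-persistence bound (which the model does not obviously supply) or to replace it with the paper's case analysis on the ordering of the two intervals.
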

\begin{proof}
When $\bm{B^n}=\varnothing$, we have $\mathcal{R}^b(q^{max}, \gamma, n_b, \mathcal{A}, \mathcal{P}^n)\le \mathcal{R}^b(q^{max}, \gamma, n_b, \mathcal{A}, \mathcal{P}^n+\Delta p)$ owing to the monotonic non-decreasing property of $\mathcal{R}^b$, which results in $\bm{B^{n+1}}=\varnothing$.

When $\bm{S^n}\bigcap \bm{B^n}=\varnothing$ ($\bm{S^n}\neq \varnothing$ and $\bm{B^n}\neq \varnothing$), suppose that under given $\mathcal{P}^n$ in round $n$, the seller's acceptable range of trading resources is denoted by $\bm{S^n}=\{S_{n, 1}, S_{n, 1}+1, S_{n, 1}+2, \cdots, S_{n, 2} \}$; that of the buyer is represented by $\bm{B^n}=\{B_{n, 1}, B_{n, 1}+1, B_{n, 1}+2, \cdots, B_{n, 2}\}$. Correspondingly, we consider the following two cases (see Fig.~2):

\noindent
$\bullet$ \textit{Case 1} ($S_{n, 1}>B_{n, 2}$): In round $n+1$, raising the price by $\mathcal{P}^{n+1}=\mathcal{P}^n+\Delta p$ will bring a lower $\mathbb{C}_{1}$, and a larger $\mathbb{C}'_{3}$. Consequently, the seller can appropriately increase $S_{n, 2}$ to $S_{n+1, 2}$ while satisfying constraint $C4$, based on the monotonic non-decreasing property of $\mathcal{R}^s$. On the contrary, the buyer will reduce $B_{n, 2}$ to $B_{n+1, 2}$ to meet constraint $C6$. Apparently, the acceptable ranges of trading resources of the two players will no longer overlap from round $n$.

\noindent
$\bullet$ \textit{Case 2} ($S_{n, 2}<B_{n, 1}$): We consider the following sub-cases in round $n-1$ (see Cases 2a, 2b, 2c, and 2d in Fig.~2). In Case 2a, we have $S_{n-1, 1}<B_{n-1, 1}<S_{n-1, 2}<B_{n-1, 2}$; in Case 2b, we have $B_{n-1, 1}<S_{n-1, 1}<B_{n-1, 2}<S_{n-1, 2}$; in Case 2c, we have $S_{n-1, 1}<B_{n-1, 1}<B_{n-1, 2}<S_{n-1, 2}$; and in Case 2d, we have $B_{n-1, 1}<S_{n-1, 1}<S_{n-1, 2}<B_{n-1, 2}$. Similarly, raising the unit price by $\mathcal{P}^n=\mathcal{P}^{n-1}+\Delta p$ enables the seller to increase $S_{n-1, 2}$ while forcing the buyer to reduce $B_{n-1, 2}$. Apparently, Case 2 will never happen.

In conclusion, the seller will no longer quote when either $\bm{S^n}\bigcap \bm{B^n}=\varnothing $~($\bm{S^n}\neq \varnothing$ and $\bm{B^n}\neq \varnothing$), or $\bm{B^n}=\varnothing$, to facilitate an efficient negotiation during forward contract design.
\end{proof}

\subsection{Transmission Power Optimization}
\noindent
During each future trading, the buyer can adjust its transmission power to obtain better utility. Note that the proposed algorithm can be applied to all the trading, we ignore the label ``$(t_i)$'' for analytical simplicity hereafter. Correspondingly, the power optimization problem $ \bm{\mathcal{F}_2}$ is reformulated by $ \bm{\mathcal{F}_4}$ as given in (20).
\vspace*{-0.4\baselineskip}  
\begin{align}
\label{eq20}
& \bm{\mathcal{F}_4}: q^{**}=\mathop{\arg\max_{q}~\mathcal{U}^b(q, \gamma, n_b, \mathcal{A}^*, \mathcal{P}^*)} ~~~~~~\text{s.t.}~C3
\end{align}

To facilitate analysis, an optimization problem $\bm{\mathcal{F}_{5}}$ equivalent~to~$\bm{\mathcal{F}_{4}}$ is applied in (21) under given $\mathcal{A}^*$ and $\mathcal{P}^*$:
\vspace*{-0.4\baselineskip}  
\begin{align}
\label{eq21}
& \bm{\mathcal{F}_{5}}: q^{**}=\mathop{\arg\min}_{q} f(q, \gamma)~~~~~~\text{s.t.}~C3,
\end{align}
where $f(q, \gamma)={(1+\omega _2q)}/{\log _{2}(1+q\gamma)}$, the relevant second derivative of which is given by (22), depicting a non-convex function that further complicates the problem:
\begin{align}
\label{eq22}
& \frac{\partial ^2f(q, \gamma)}{\partial ^2q}={\frac{\gamma (-\omega _2\gamma q\ln (1+q\gamma)+2\omega _2\gamma -2\omega _2\ln (1+q\gamma)+\gamma \ln (1+q\gamma)+2\gamma)}{{(\ln 2)}^2\times {\log _2(1+q\gamma)}^3{\times (1+q\gamma)}^2}}.
\end{align}
Thus, the method of changing variables is considered by applying $\beta={1}/{\log _{2}(1+q\gamma)}$, which enables an equivalent and convex optimization problem $ \bm{\mathcal{F}_6}$ given in (23):
\begin{align}
\label{eq23}
&\bm{\mathcal{F}_6}: {\beta}^* =\mathop{\arg\min}_{\beta} h(\beta, \gamma)\\
 & \text{s.t. }C7: \beta \ge \frac{1}{\log _2(1+\gamma q^{max})},\notag  
\end{align}
where $h(\beta, \gamma)=\beta +\frac{\omega _2\beta \left(2^{\frac{1}{\beta}}-1\right)}{\gamma}$ with the second derivative $\frac{\partial ^2h(\beta, \gamma)}{\partial ^2\beta} =\frac{{(\ln 2)}^2\times \omega _2{\times 2}^{\frac{1}{\beta}}}{\gamma {\beta}^3}>0$. Consequently, $ \bm{\mathcal{F}_6}$ can be solved by letting $\frac{\partial h(\beta, \gamma)}{\partial \beta} =0$ while meeting constraint $C7$. Thus, we can calculate $q^{**}$ as shown in (24), where $\textbf{W}(\cdot)$ denotes the Lambert W function~\cite{43}; the relevant derivation can be found in the Appendix.
\begin{align}
\label{eq24}
& q^{**}=
\begin{cases}
\vspace{-0.07in}
\frac{{\left(2^{\frac{\textbf{W}\left(\frac{\gamma -\omega _2}{\rm{e}\times \omega _2}\right)+1}{\ln 2}}-1\right)}}{\gamma} , \text{~if~}\frac{\ln 2}{\textbf{W}\left(\frac{\gamma -\omega _2}{\rm{e}\times \omega _2}\right)+1}>\frac{1}{\log _2(1+\gamma q^{max})} \vspace{-0.5ex} \\ 
q^{max}, \text{~~~~~~~~~~~~~~~~~~otherwise} 
\end{cases}
\end{align}

%
%

\begin{figure*}[b!]
\centering
\subfigure[]{\includegraphics[width=.244\linewidth]{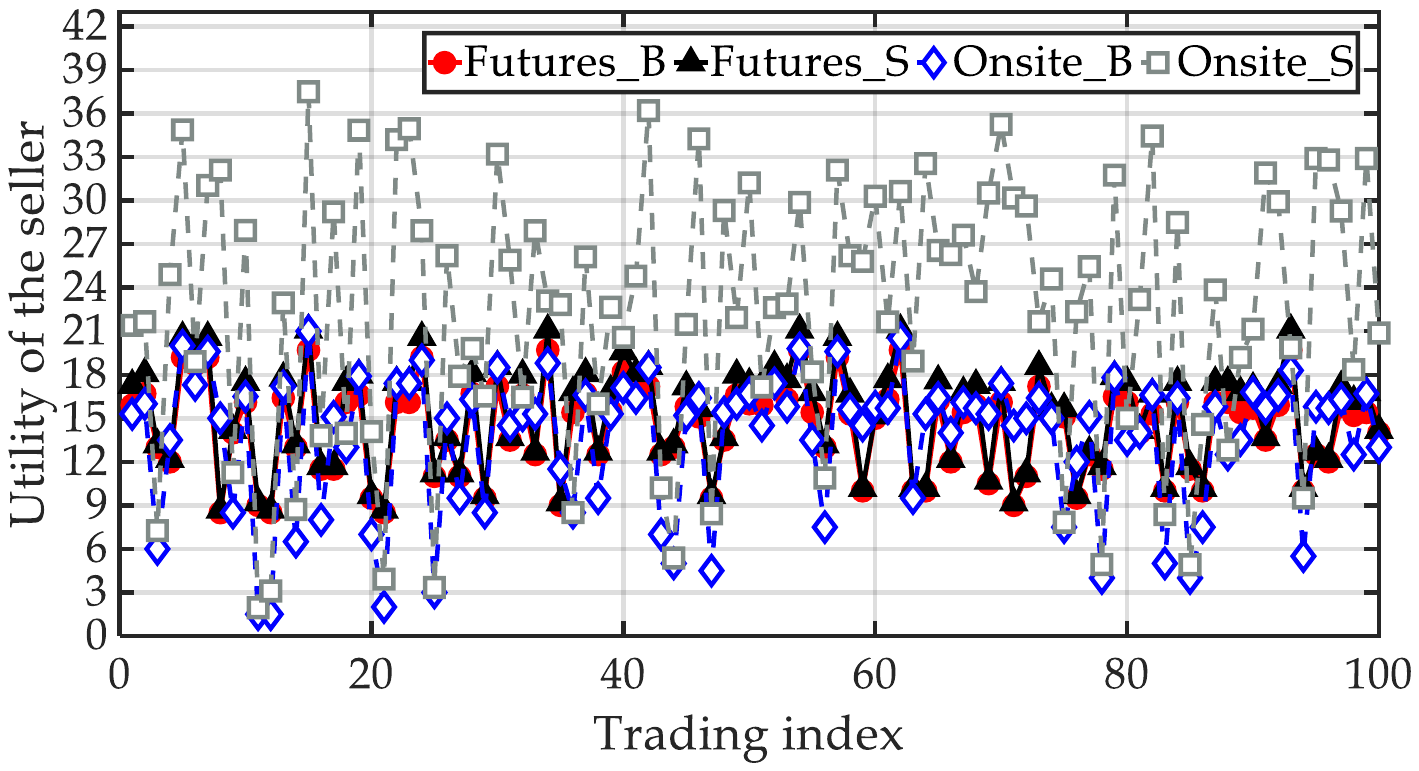}}
\subfigure[]{\includegraphics[width=.244\linewidth]{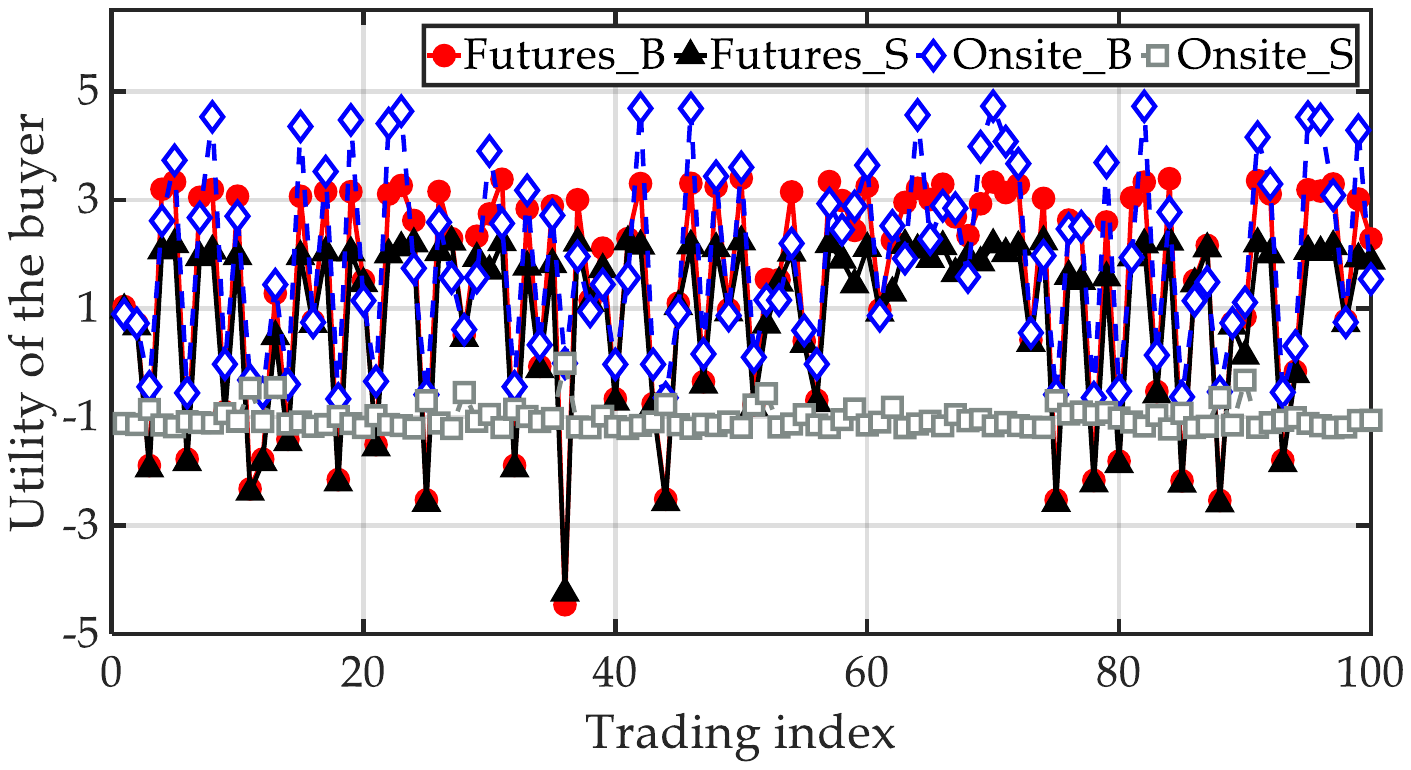}}
\subfigure[]{\includegraphics[width=.244\linewidth]{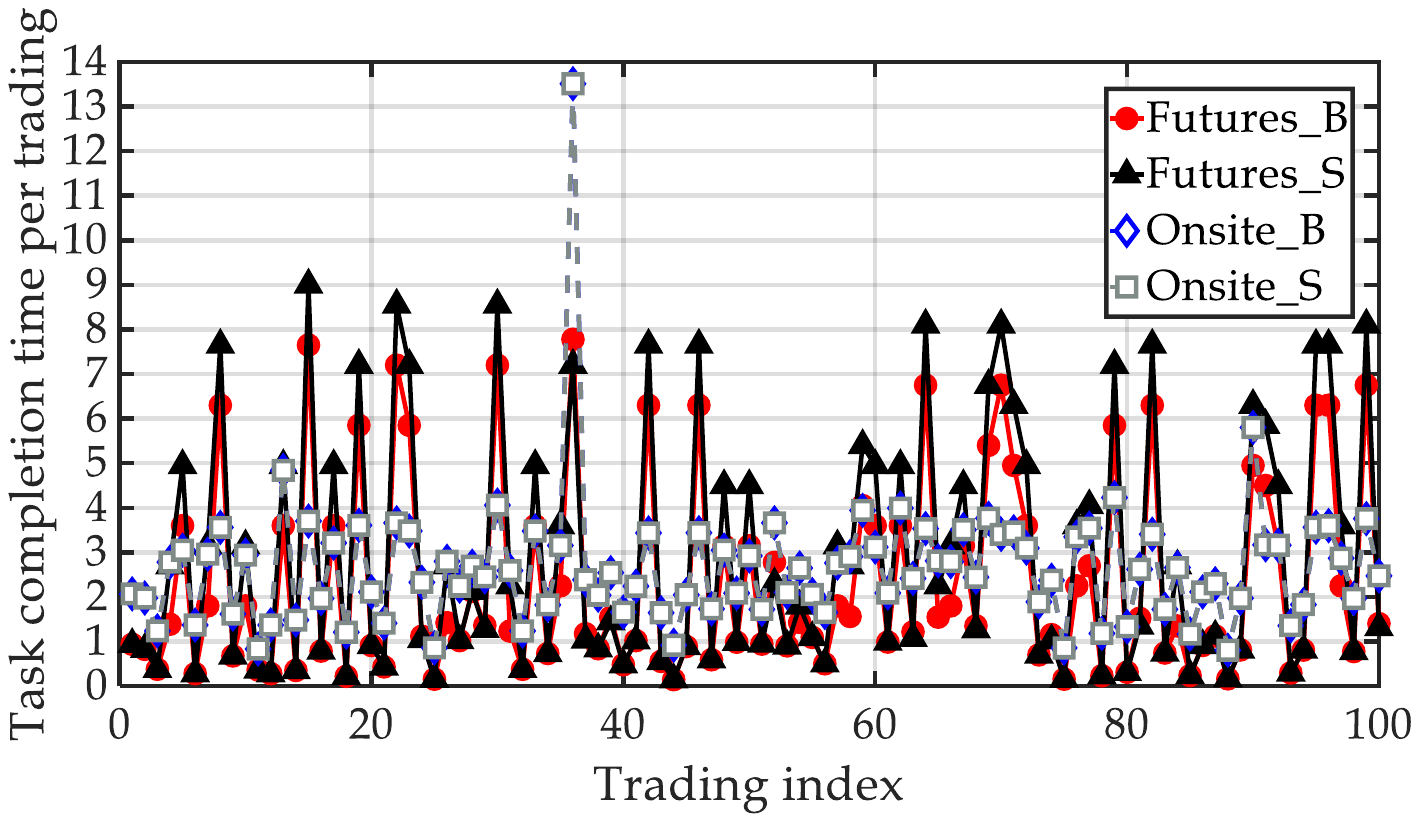}}
\subfigure[]{\includegraphics[width=.244\linewidth]{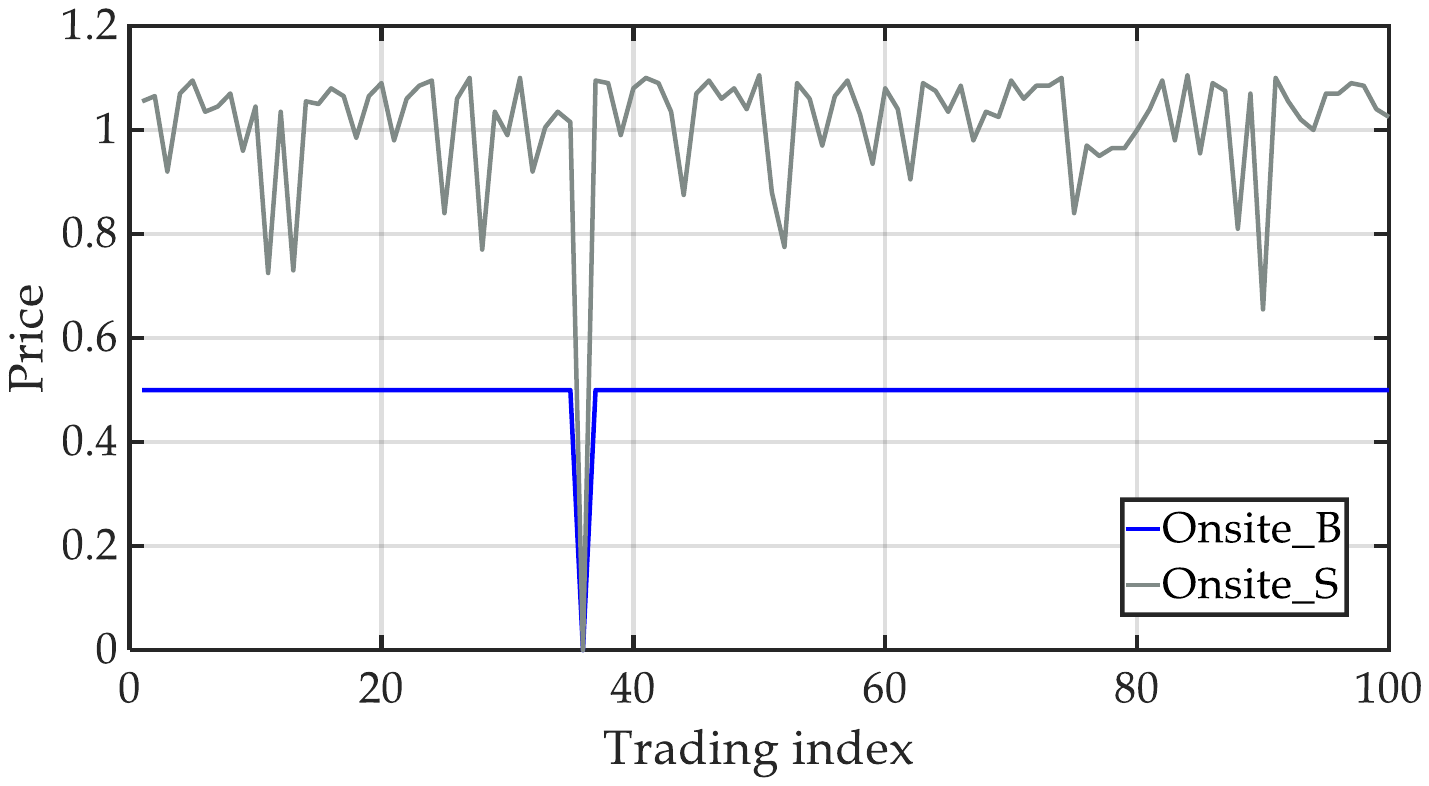}}\\
\vspace*{-.6em}
\subfigure[]{\includegraphics[width=.244\linewidth]{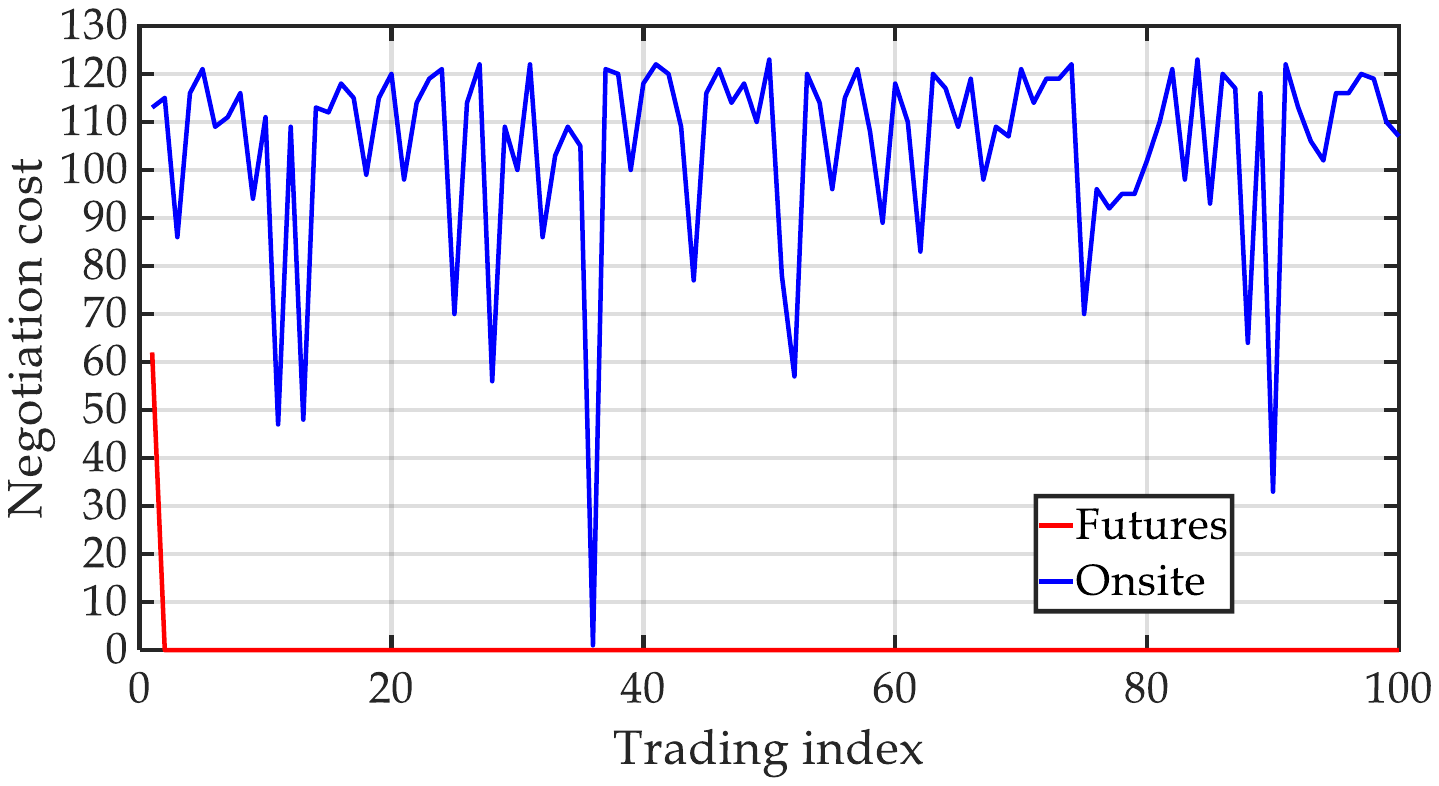}}
\subfigure[]{\includegraphics[width=.244\linewidth]{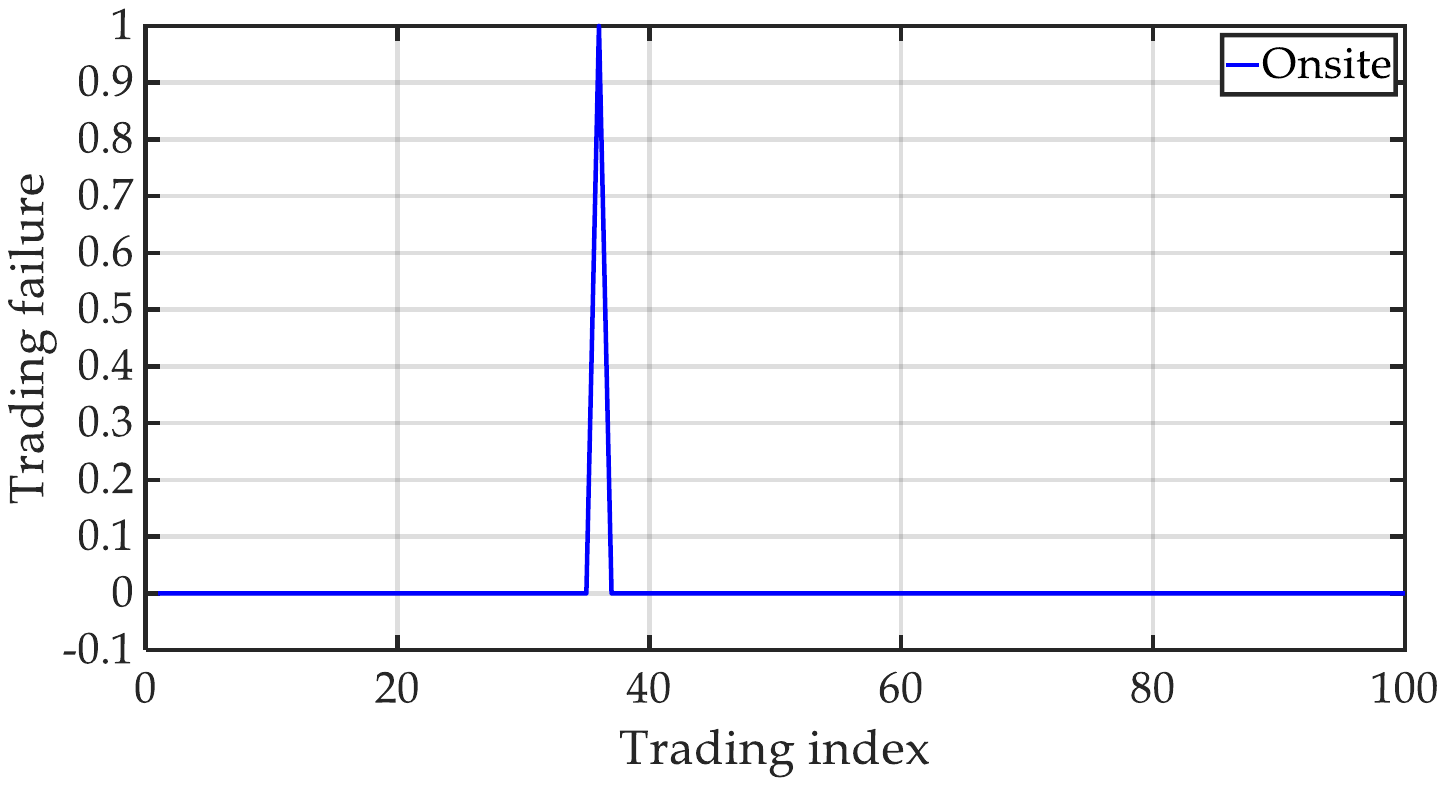}}
\subfigure[]{\includegraphics[width=.244\linewidth]{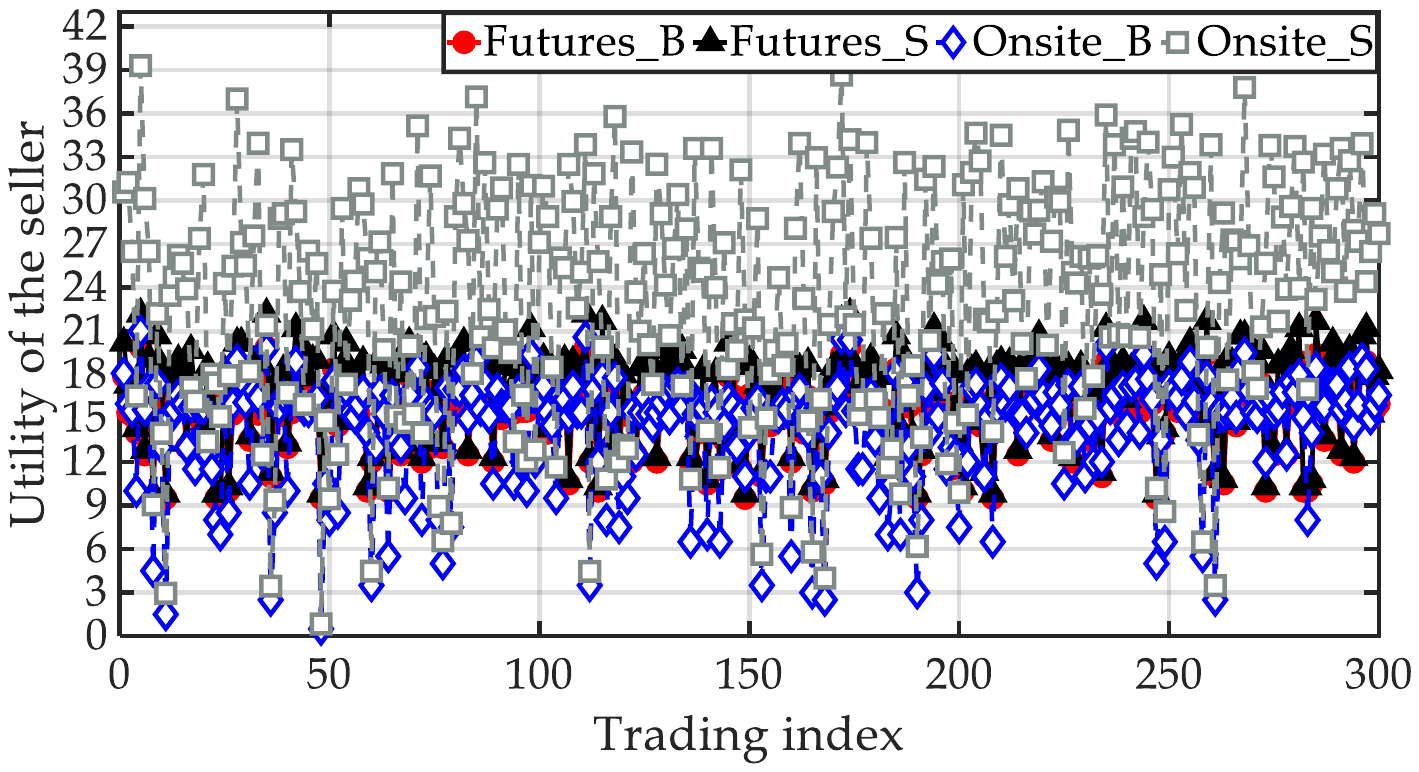}}
\subfigure[]{\includegraphics[width=.244\linewidth]{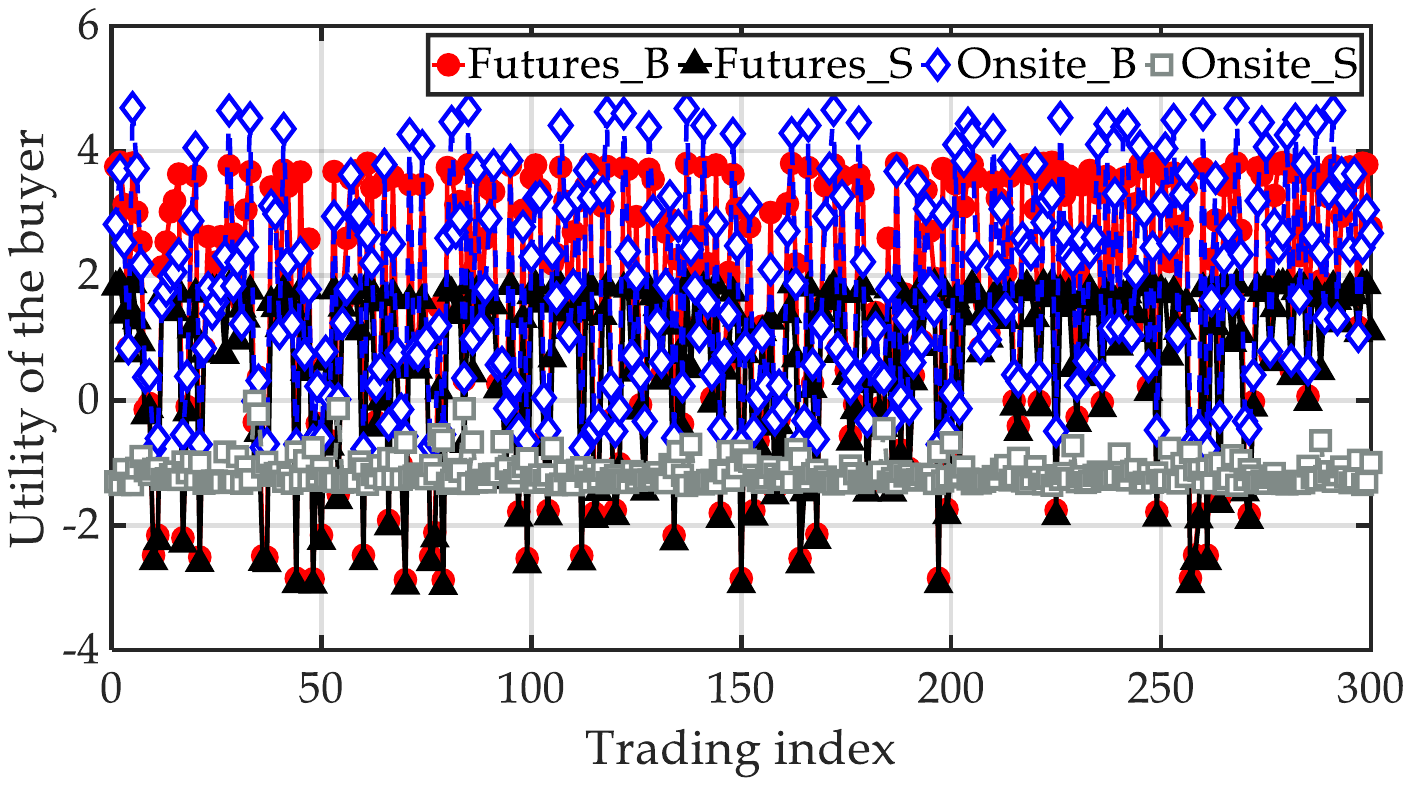}}\\
\vspace*{-.6em}
\subfigure[]{\includegraphics[width=.244\linewidth]{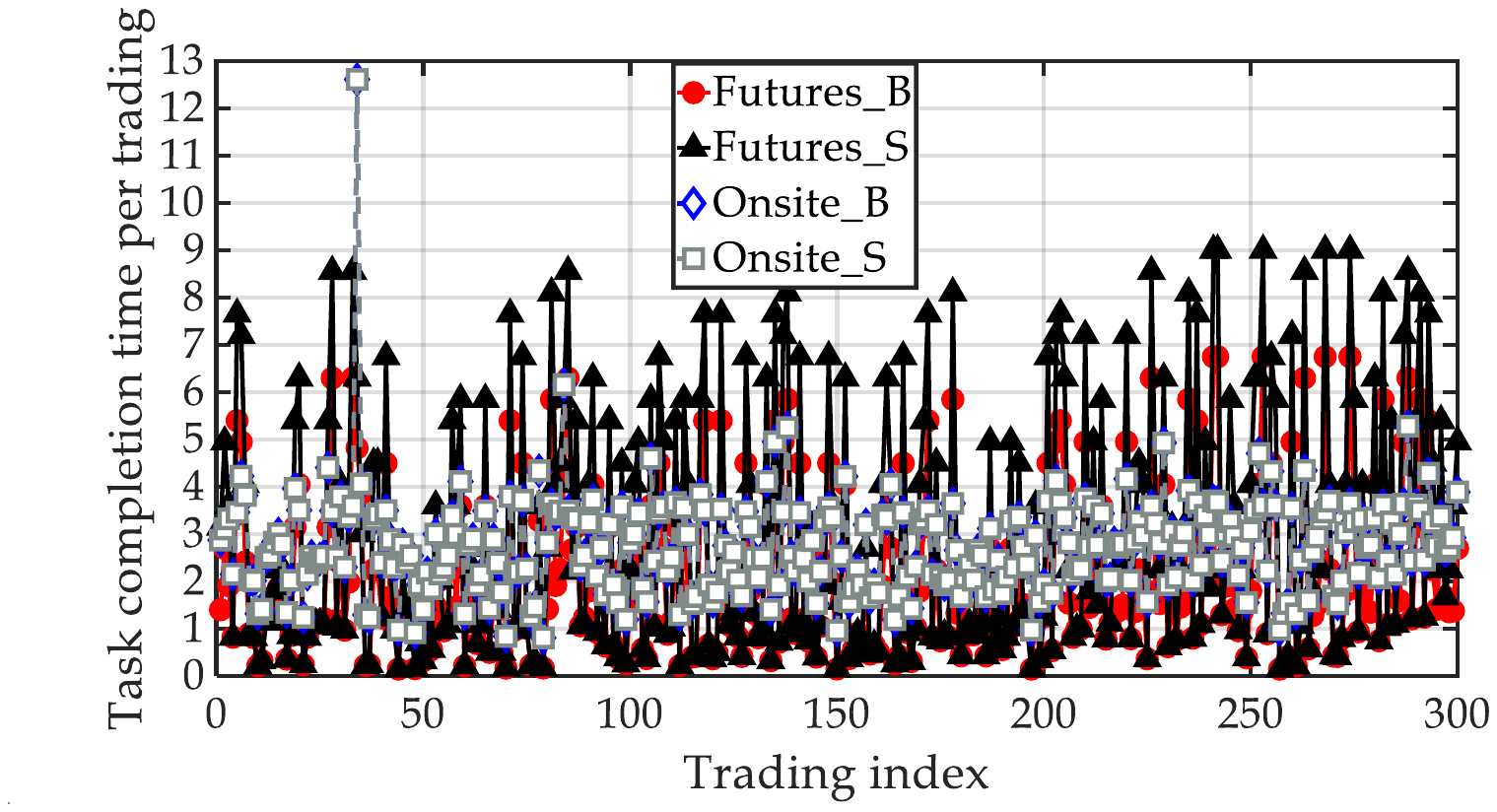}}
\subfigure[]{\includegraphics[width=.244\linewidth]{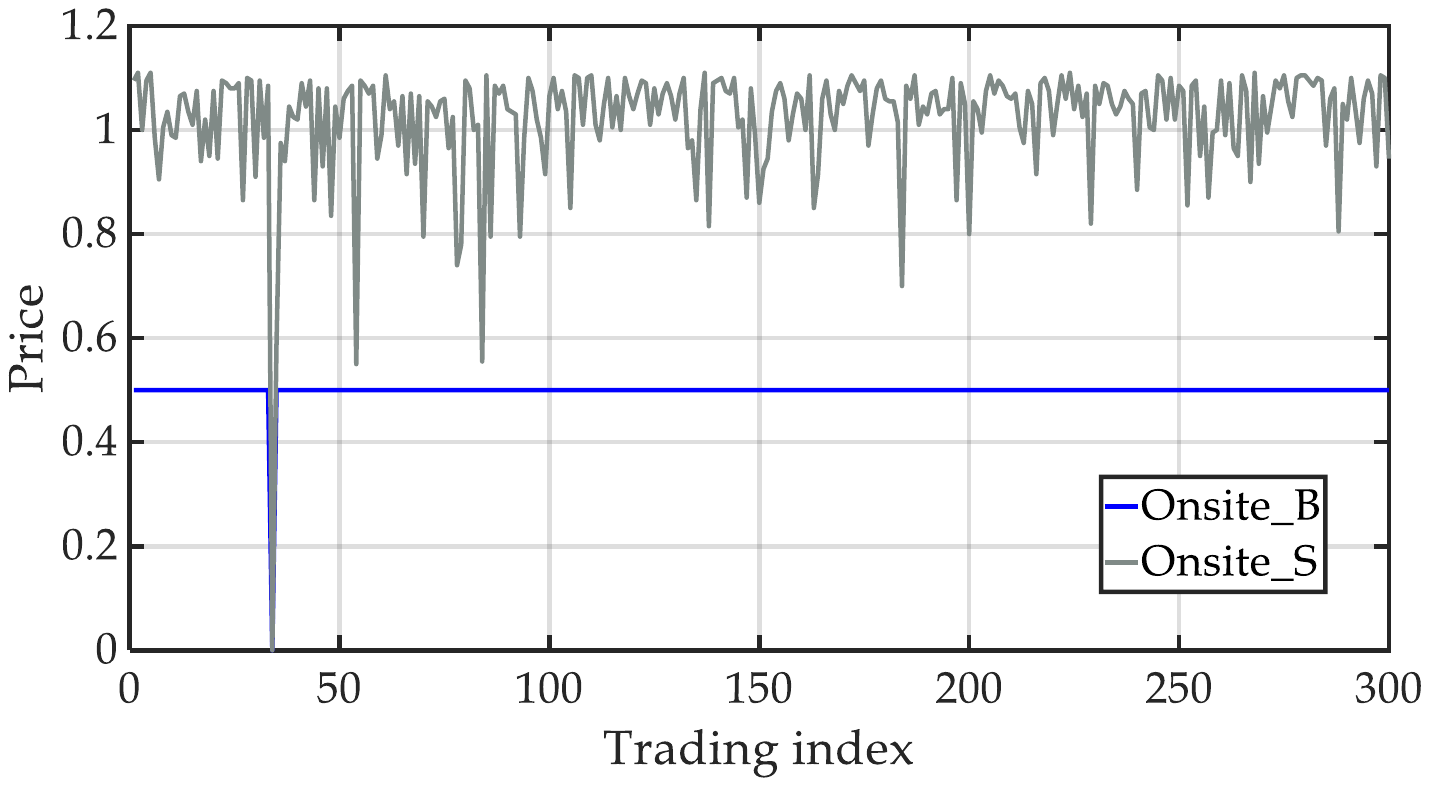}}
\subfigure[]{\includegraphics[width=.244\linewidth]{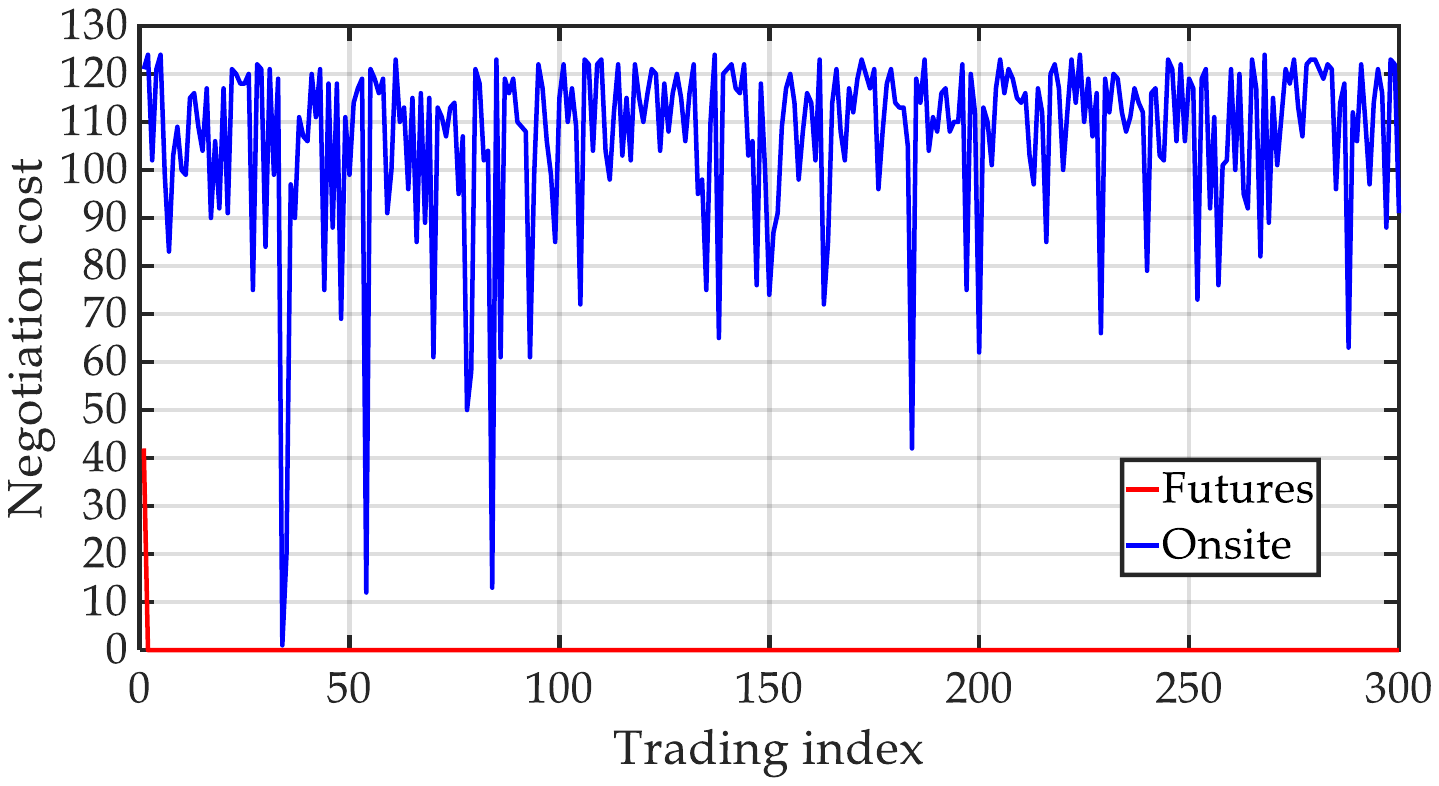}}
\subfigure[]{\includegraphics[width=.244\linewidth]{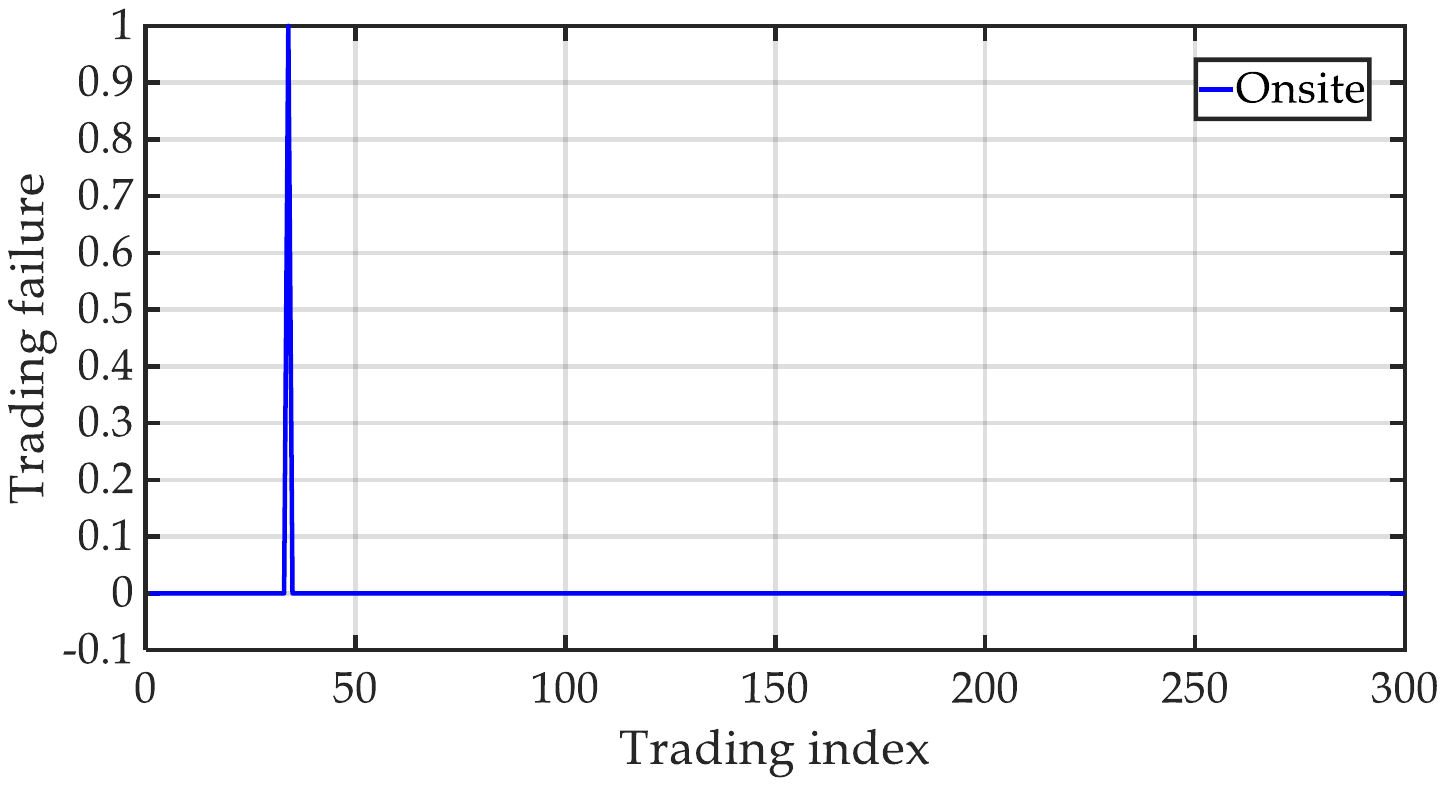}}
\caption{Performance comparison and evaluation of players' utilities, task completion time, resource prices, negotiation costs, and trading failures considering 100 and 300~trading ($\tau ^s=0.08, \tau ^b=0.45$, $\Delta p=0.005 $).}
\label{fig4}
\vspace*{-.6em}
\end{figure*}

\section{Experimental Results}

\noindent
This section presents comprehensive simulation results based on the Monte Carlo method along with performance evaluations illustrating the validity of the proposed futures-based resource trading mechanism,  compared with the onsite trading mechanism. Specifically, in each onsite trading, the two players negotiate an agreement regarding the amount of trading resources and the relevant price that maximizes their utilities, relying on the current $n_l$, $n_b$, and $\gamma $. For notational simplicity, the proposed futures-based resource trading mechanisms for which the buyer and the seller determine the final contract term are abbreviated to ``Futures\_B'' and ``Futures\_S'', respectively (collectively termed ``Futures''). Similarly, let ``Onsite\_B'' and ``Onsite\_S'' represent the onsite trading where the buyer and the seller respectively decide the final trading resources and relevant price (collectively termed ``Onsite''). 

\subsection{Critical Indicators and Parameter Settings}

\noindent
To reveal the validity of the proposed mechanism, several critical indicators applied in this simulation are introduced below:

\noindent
$\bullet$ \textbf{Trading unfairness (UFair)}: Unfairness in a resource trading system is mainly attributable to fluctuating prices~\cite{25,33}. In this simulation, we calculate UFair as the standard~deviation~of prices, where a larger standard~deviation~leads to worse fairness.

\noindent
$\bullet$ \textbf{Negotiation cost (NC) and latency (NL)}: In this simulation, NC is mainly considered as the players' energy and battery consumption (especially for the buyer) while negotiating the trading consensus. Note that NC is difficult to quantize by a numerical value; thus, we take the number of quotation rounds (e.g., $n$ in Algorithm~1) to reflect NC. Moreover, the latency for reaching a trading consensus is calculated as $\text{NL}=n\times t^{NL}$, where $t^{NL}$ indicates the latency per quotation.

\noindent
$\bullet$ \textbf{Task completion time} $t^{Comp}$: The buyer's task completion time in each trading is calculated as (25), via considering the task offloading delay through the A2G wireless link, task execution time, and negotiation latency. Specifically, $(\cdot)^+$ refers to the larger completion time between local computing and edge computing.
\vspace*{-0.4\baselineskip}  
\begin{align}
\label{eq25}
& t^{Comp}={\left(\tau ^b(n_b-{[\mathcal{A}, n_b]}^-), \left(\tau ^s+\frac{D{[\mathcal{A}, n_b]}^-}{W\log _{2}(1+{\gamma q}^{**})}\right)\right)^++\text{NL}} 
\end{align}

\noindent
$\bullet$ \textbf{Trading failures (TFail)}: In this simulation, TFail denotes the number of failed trading. Specifically, when a trading fails, let $\mathcal{A}= \mathcal{P}=q=\mathcal{U}^b=0$. 

\noindent
$\bullet$ \textbf{Buyer's net utility}: Because onsite negotiation may lead to heavy latency, the buyer's utility during each trading is recalculated as the net utility, given by $\mathcal{U}^b\leftarrow \mathcal{U}^b-\text{NL}$. 

\noindent
Major simulation parameters are set as follows: $r_l\in [0.2, 0.4]$, $p_l\in [0.3, 0.5]$, $V\in [30, 35]$, $M\in (30, 40]$, $N\in (30, 40]$, $\lambda^s_1\in [0.95, 1]$, $\lambda^s_2\in [0.3, 0.4]$, $D\in [3, 4]$Mb, $W\in [6, 8]$MHz, $q^{max}\in [500, 1000]$mW, $ \ell ={10}^{-5}$, $\varepsilon _1\in [5, 100]$, $\varepsilon _2\in [300, 400]$, $\omega _1\in [0.3, 0.5]$, $\omega _2\in [0.3, 0.5]$, $\lambda^b_1\in [0.3, 0.4]$, $\lambda^b_2\in [0.2, 0.3]$, $\tau ^s=0.08$s, $\tau ^b\in [0.35, 1.6]$s, and $t^{NL}\in [5, 30]$ms~\cite{44}.

\subsection{Performance~Comparison and Evaluation}

\noindent
Fig.~3 depicts the performance comparison and evaluation of the players' utilities (Fig.~3(a), Fig.~3(b), Fig.~3(g), and Fig.~3(h)), the task completion time (Fig.~3(c) and Fig.~3(i)), the price fluctuation (Fig.~3(d) and Fig.~3(j)), the negotiation cost (Fig.~3(e) and Fig.~3(k)), and trading failures (Fig.~3(f) and Fig.~3(l)), upon having different number of trading. Additionally, Table~\ref{tab1} lists the relevant indicators associated with Fig.~3. As illustrated in Fig.~3(a), Fig.~3(b), Fig.~3(g), and Fig.~3(h), the seller's utility of Futures\_S and Onsite\_S is higher in most of the trading compared to that of Futures\_B and Onsite\_B; more specifically, the seller determines the trading resources and relevant price by maximizing its own benefit. In particular, Onsite\_S sometimes offers the seller greater utility than Futures\_S due to prediction-related uncertainties during each trading. However, Onsite\_S consistently leads to negative utility for the buyer, whereas the proposed Futures\_S achieves far better performance (see Sum($\mathcal{U}^b$) of  Table~\ref{tab1}). 

\begin{table*}[h!]
{\small
\centering
\caption{Evaluations of critical indicators associated with Fig.~3}
\setlength{\tabcolsep}{1.5mm}{
\begin{tabular}{|*{9}{c|}}
\hline
Number of trading & \multicolumn{4}{c|}{100~trading} &  \multicolumn{4}{c|}{300~trading} \\ 
\hline
Algorithm & Futures\_B & Futures\_S & Onsite\_B & Onsite\_S & Futures\_B & Futures\_S & Onsite\_B & Onsite\_S \\ 
\hline
1. Sum($\mathcal{U}^s$) & 1436.2 & 1518.6 & 1365.3 & 2235.1 & 4563.7 & 5043.7 & 4285.1 & 6999.9 \\ 
2. Sum($\mathcal{U}^b$) & 141.8 & 82.6 & 183.7 & $-$103.3 & 491.5 & 189.3 & 546.7 & $-$349.2 \\ 
3. Sum($t^{Comp}$) & 236.4 & 294.1 & 266.8 & 266.8 & 614.3 & 889.7 & 825.6 & 825.6 \\ 
4. UFair & 0 & 0 & 0.05 & 0.14 & 0 & 0 & 0.0289 & 0.1072 \\ 
5. TFail & 0 & 0 & 1 & 1 & 0 & 0 & 1 & 1 \\ 
6. Sum(NC) & 26 & 26 & 14085 & 14085 & 42 & 42 & 32085 & 32085 \\ 
7. Sum(NL) & 0.286s & 0.286s & 154.935s & 154.935s & 0.462s & 0.462s & 0.462s & 352.935s \\ 
\hline
\end{tabular}}
\label{tab1}
}
\end{table*}

\begin{figure*}[h!t]
\centering
\subfigure[]{\includegraphics[width=.31\linewidth]{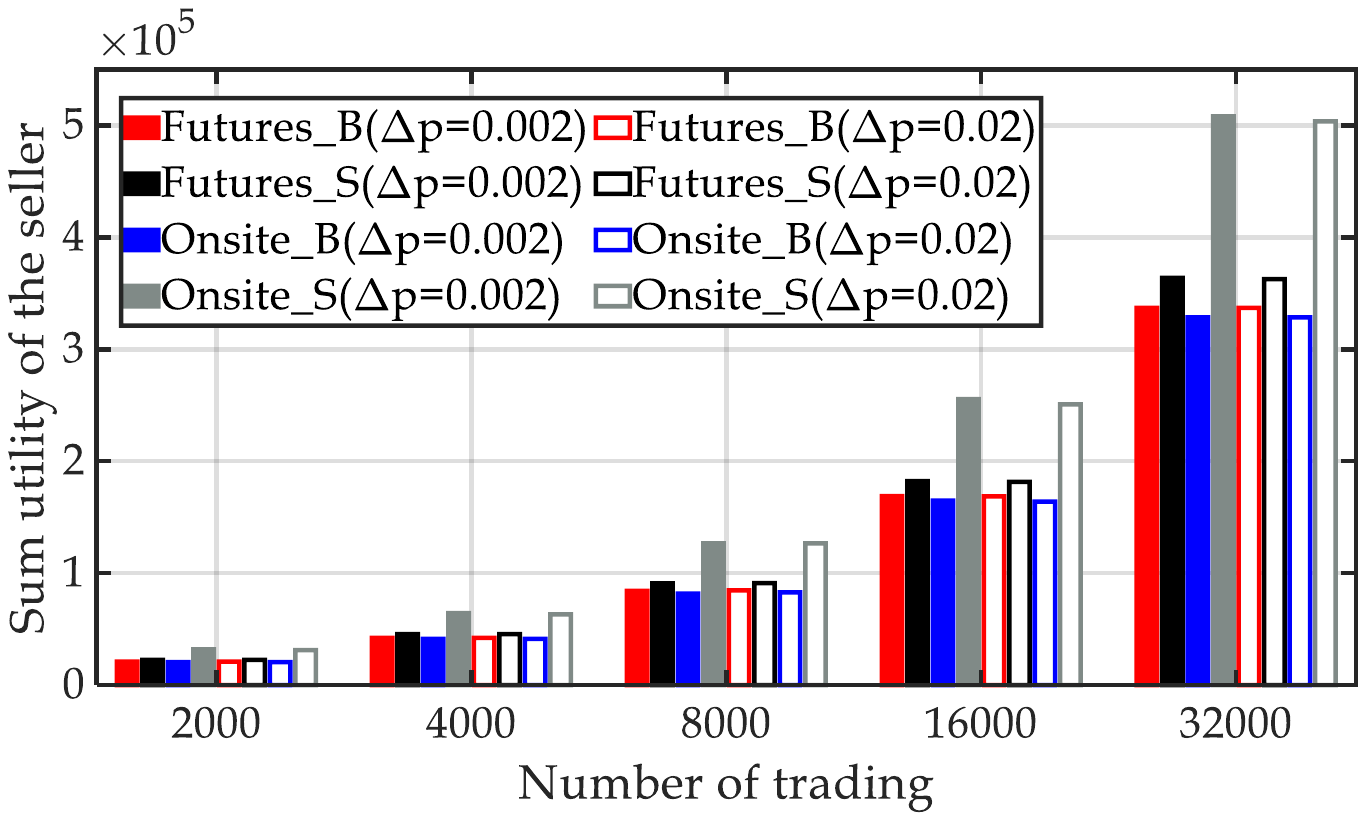}}\hfill
\subfigure[]{\includegraphics[width=.32\linewidth]{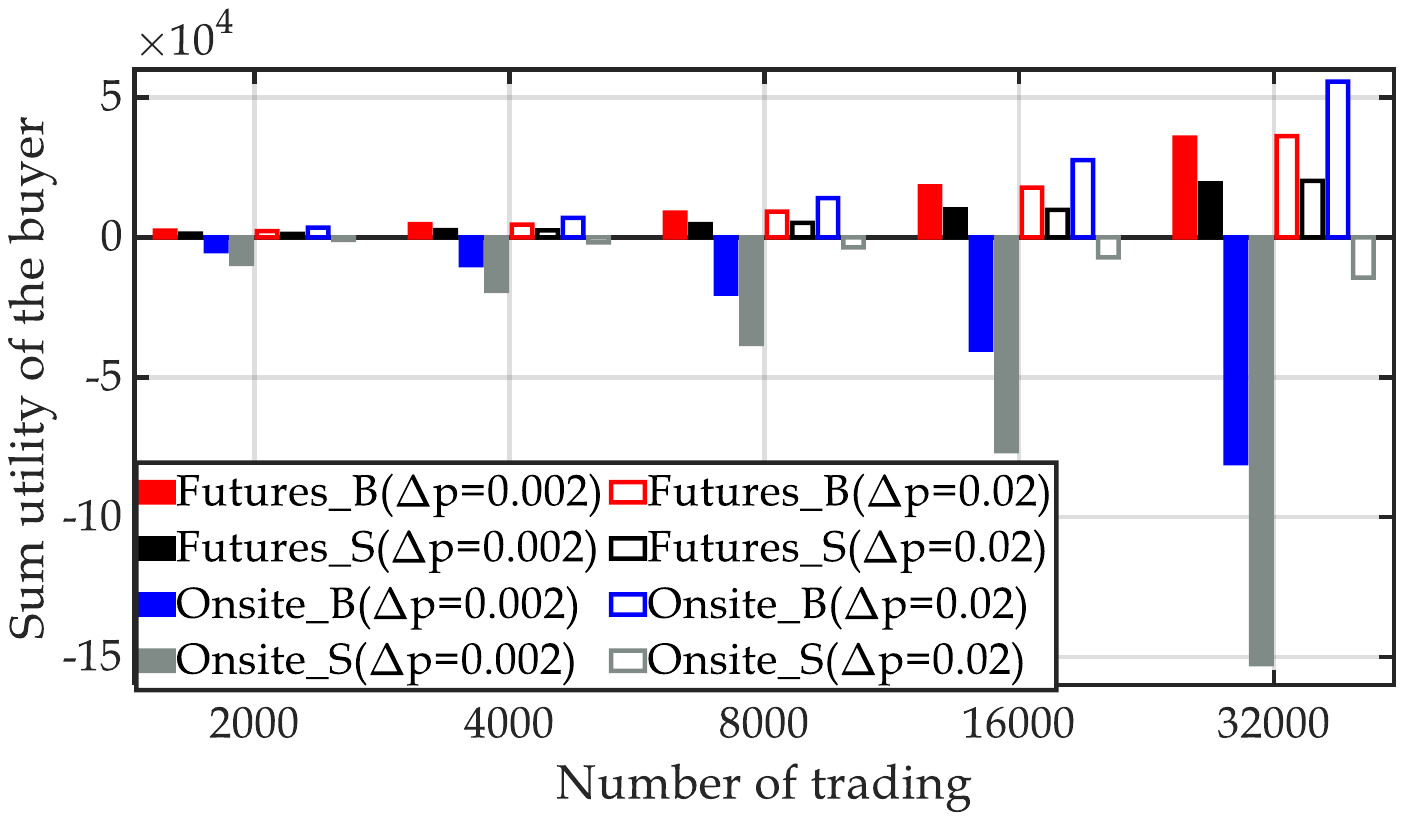}}\hfill
\subfigure[]{\includegraphics[width=.315\linewidth]{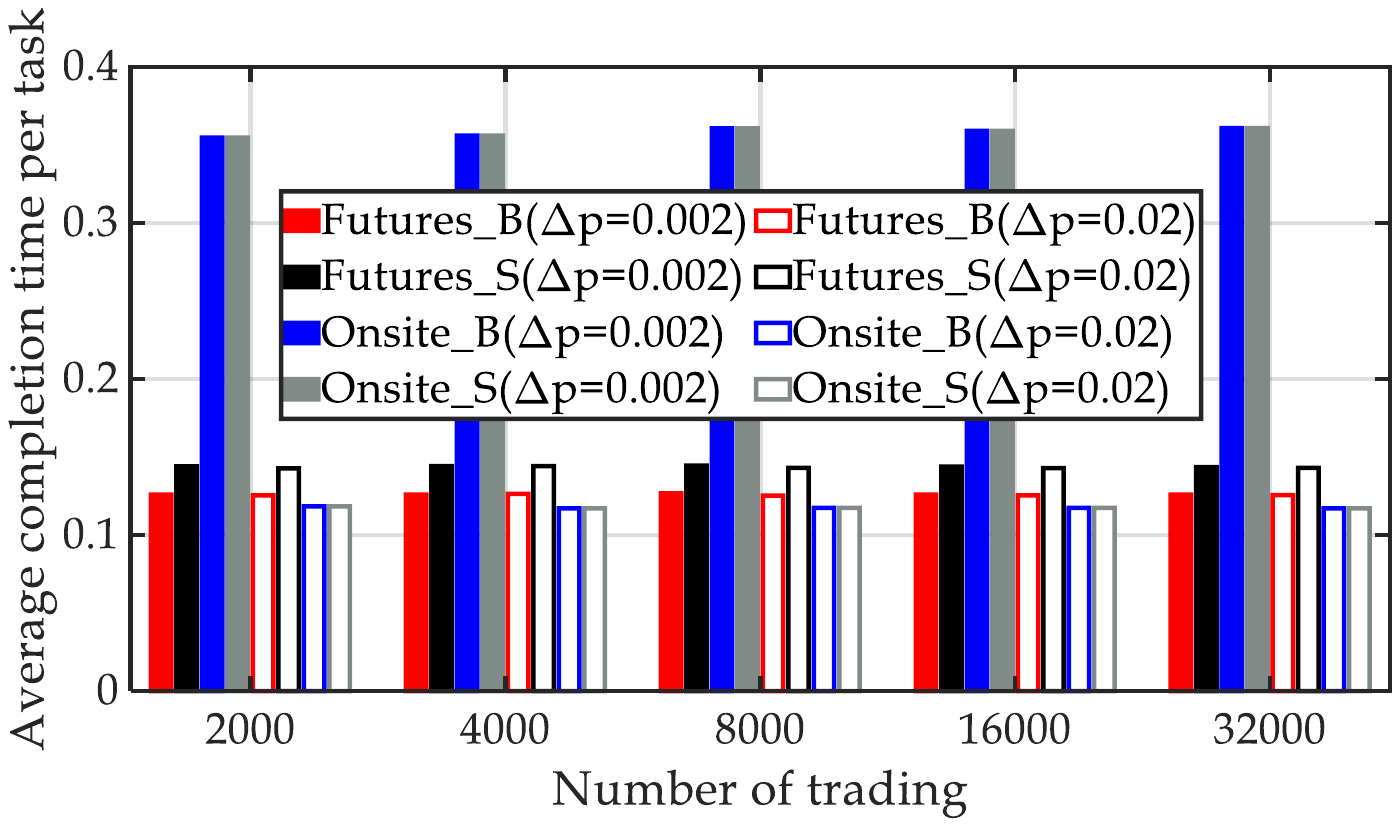}}\\
\vspace*{-.6em}
\subfigure[]{\includegraphics[width=.32\linewidth]{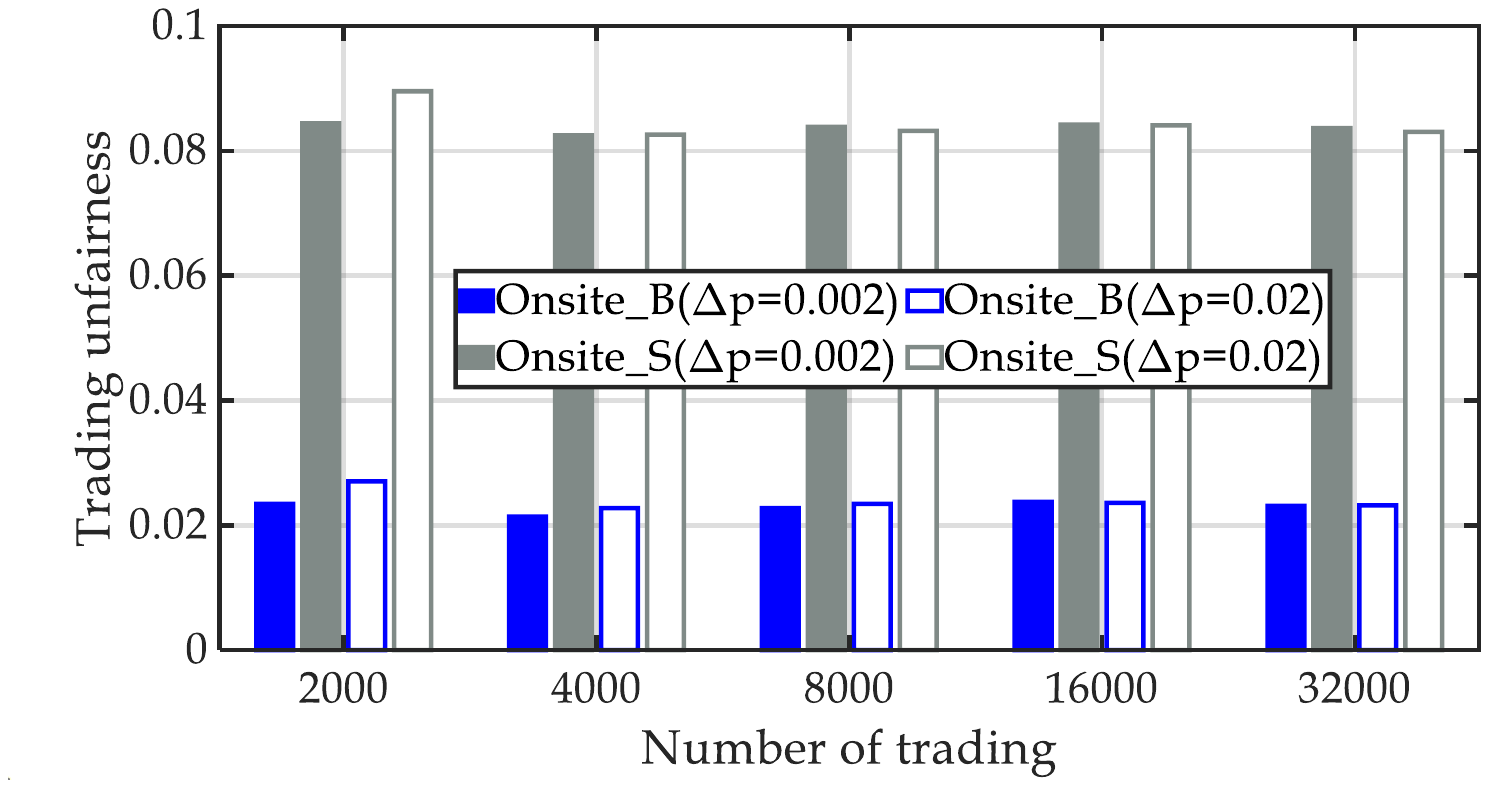}}\hfill
\subfigure[]{\includegraphics[width=.31\linewidth]{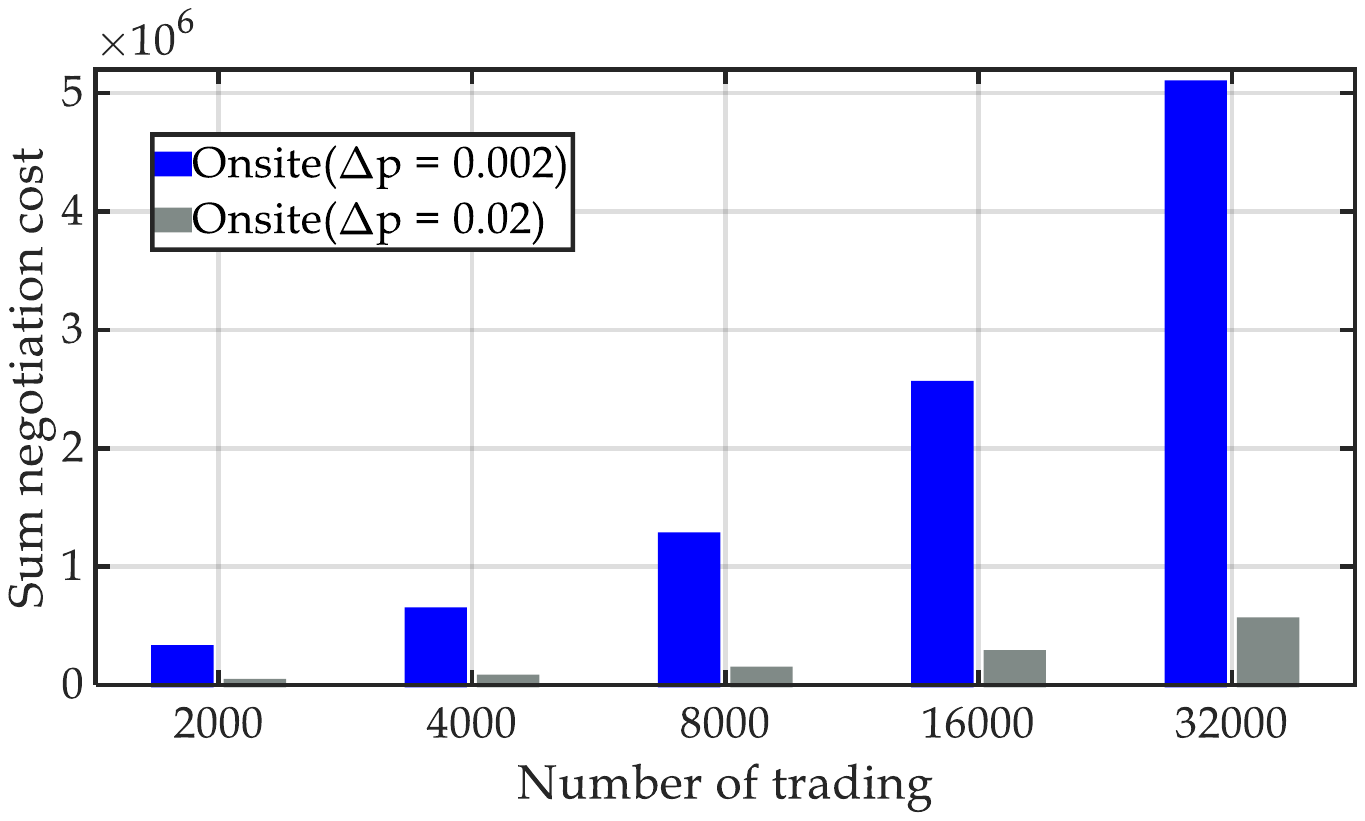}}\hfill
\subfigure[]{\includegraphics[width=.315\linewidth]{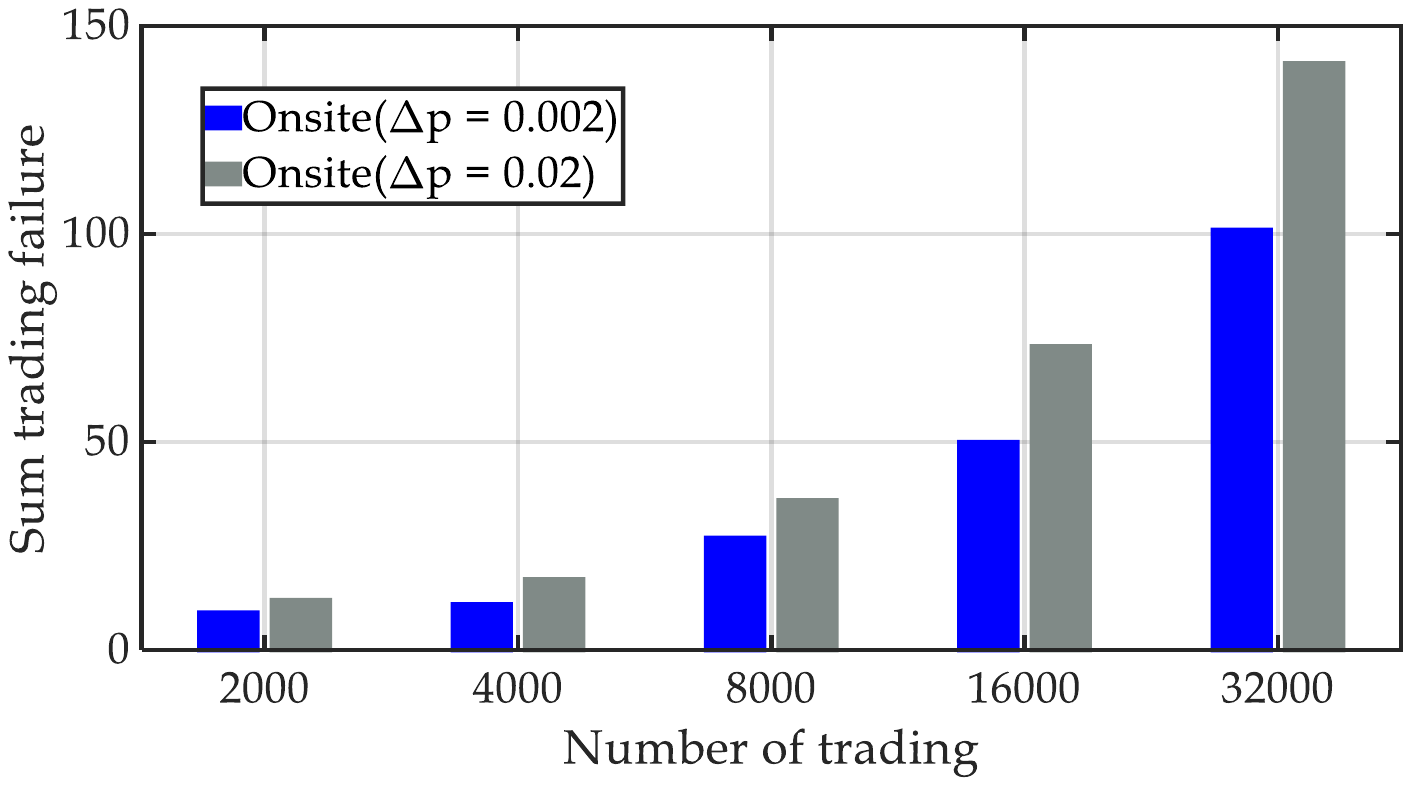}}
\vspace*{-.6em}
\caption{Performance comparison and evaluation of sum utility, negotiation costs, failures, average task completion time, and trading fairness.}
\label{fig5}
\end{figure*}

Futures\_B and Onsite\_B enhance the buyer's utility compared with Futures\_S and Onsite\_S since the buyer decides the final contract term by optimizing personal benefit. Notably, the proposed Futures\_B achieves better performance on the seller's and buyer's cumulative utility (see Sum($\mathcal{U}^b$) and Sum($\mathcal{U}^s$) in Table~\ref{tab1}) although Futures\_B may return unsatisfactory values in several trading due to unpredictable current resource supply, demand, and A2G channel quality. In Fig.~3(c), Fig.~3(i), and Table~\ref{tab1}, the proposed Futures\_B leads to swifter task completion per trading in most cases along with better sum task completion time (see Sum($t^{Comp}$)), Table~\ref{tab1}) compared with Futures\_B. Onsite\_S achieves slightly faster task completion (both per trading and cumulatively) than the proposed Futures\_S at the expense of the buyer's utility based on paying higher resource prices, thus leading to an unsatisfactory trading experience for the buyer. Additionally, in Figs.~3(d)-(f) and Figs.~3(j)-(l), the onsite-based mechanisms suffer from certain degrees of price fluctuations, heavy negotiation latency, and cost as well as failures, leading to general instability in the resource-trading system. In detail, onsite players may have to face heavy negotiation costs and long latency to reach trading terms. For example, the negotiation cost and latency spent by onsite players are respectively 541.7 and 763.9~times greater than those under the proposed futures-based mechanisms, which greatly reduces trading efficiency in wireless communication environments. Consequently, the proposed futures-based trading mechanisms can bring mutually beneficial utilities to both the players, while outperforming the onsite trading mechanisms with respect to unfairness, negotiation latency, cost, and failures.
\begin{figure}[h!t]
\centering
\subfigure[]{\includegraphics[width=.53\linewidth]{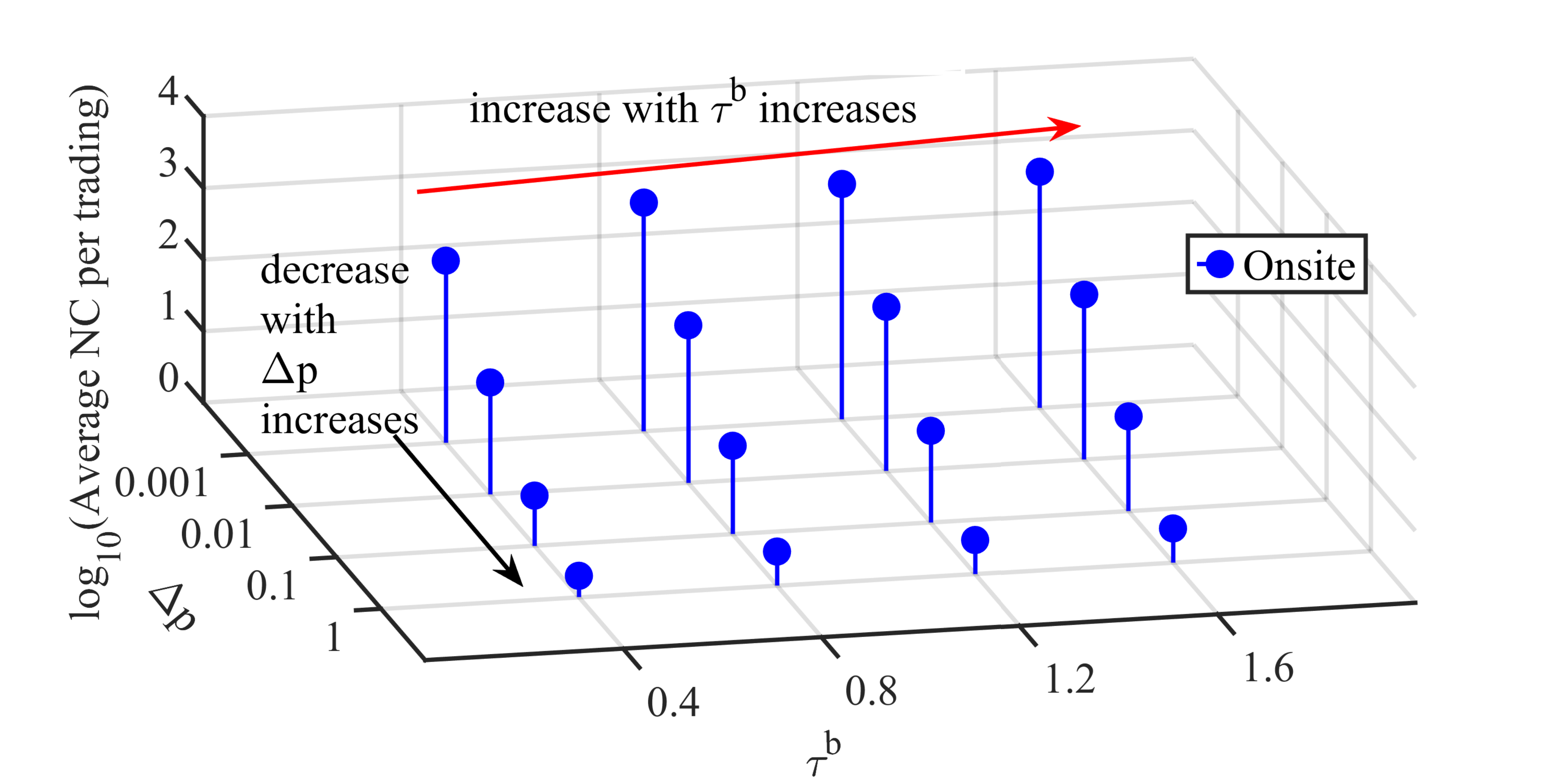}}~
\subfigure[]{\includegraphics[width=.53\linewidth]{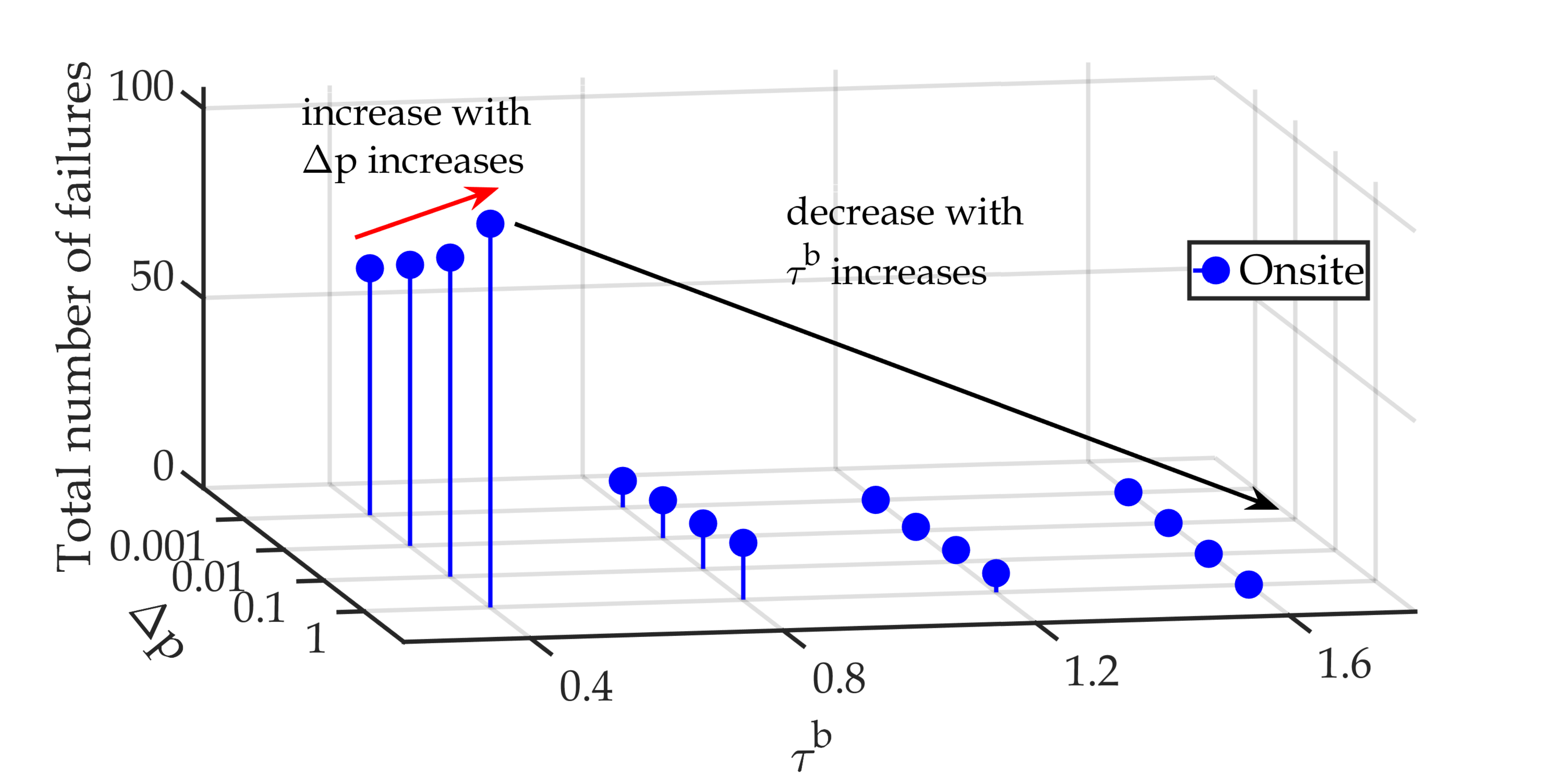}}
\vspace*{-.6em}
\caption{Effects on negotiation latency and trading failures upon having various $\Delta p$ and $\tau^b$.}
\label{fig6}
\vspace*{-.9em}
\end{figure}

Apparently, it is difficult to fully reflect the advantages of the proposed futures-based mechanism through a single trading, due to the unpredictability of network conditions, resource supply and demand. Fig. 4~thus demonstrates the performance evaluation and comparison of the players' cumulative utility (Figs.~4(a) and Fig.~4(b)), negotiation cost (Fig.~4(e)), and trading failures (Fig.~4(f)) via concerning a large number of trading as indicated by the Monte~Carlo~method. Fig.~4 also displays the associated average task completion time (Fig.~4(c)) and trading unfairness (Fig.~4(d)), comparing the proposed futures-based and onsite-based mechanisms by considering different values of $\Delta p$. Figs.~4(a)-(b) reveal that the proposed Futures\_B achieves slightly higher cumulative utility for the seller but suffers from lower such utility for the buyer compared with Onsite\_B when $\Delta p=0.02$. In this case, the negotiation latency and cost decline as the value of $\Delta p$ increases, which, however, will lead to more trading failures as shown in Fig.~4(f). Moreover, the seller of Onsite\_S always obtains greater cumulative utility compared with the proposed Futures\_S by sacrificing the buyer's benefits, which hinders the development of a mutually beneficial resource trading system. To alleviate unexpected failures by considering a smaller value of $\Delta p$ ($\Delta p=0.002 $), the proposed Futures\_B achieves better performance for both players' cumulative utility than Onsite\_B, while the proposed Futures\_S obtains a considerable cumulative utility for the seller (although slightly lower than that of Onsite\_S) and far better performance in terms of the buyer's cumulative utility. Related explanations appear in Fig.~4(e), where a smaller $\Delta p$ results in onsite-based mechanisms with drastically heavier negotiation costs and renders them unfeasible in wireless communication environments. 

Fig.~4(c) presents a comparison and evaluation of average task completion time considering different mechanisms. The proposed futures-based mechanisms consistently obtain far better task completion performance compared with onsite-based mechanisms when $\Delta p=0.002 $. In the case of $\Delta p=0.02$, although the onsite-based mechanisms achieve slightly better task completion performance, either the seller or the buyer will always sacrifice its benefit as depicted in Fig.~4(a) and Fig. 4(b). The trading unfairness of the two onsite-based methods is illustrated in Fig.~4(d), where Onsite\_S leads to worse fairness by choosing the final trading term with the largest available price, compared with Onsite\_B. 

\begin{figure*}[t!]
\centering
\subfigure[]{\includegraphics[width=.317\linewidth]{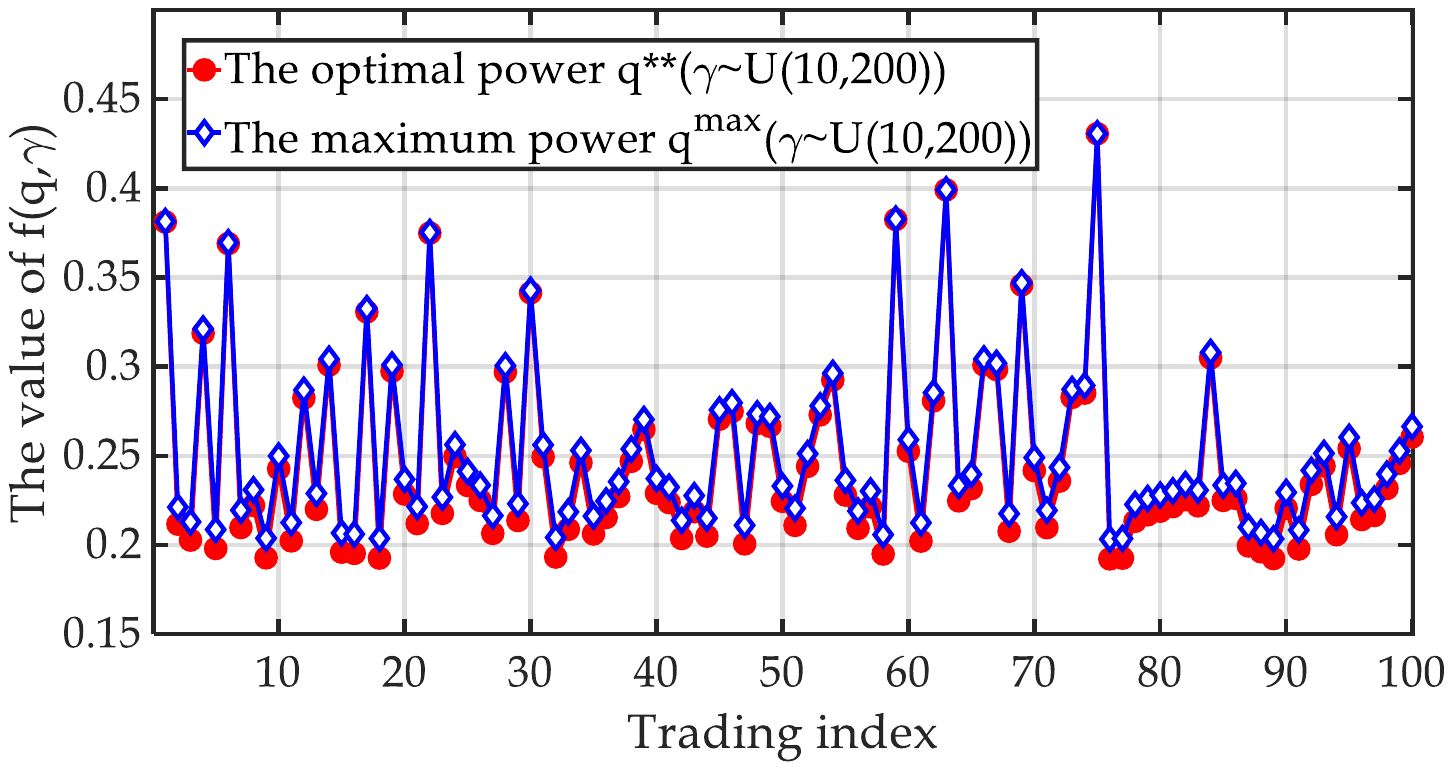}}\hfill
\subfigure[]{\includegraphics[width=.31\linewidth]{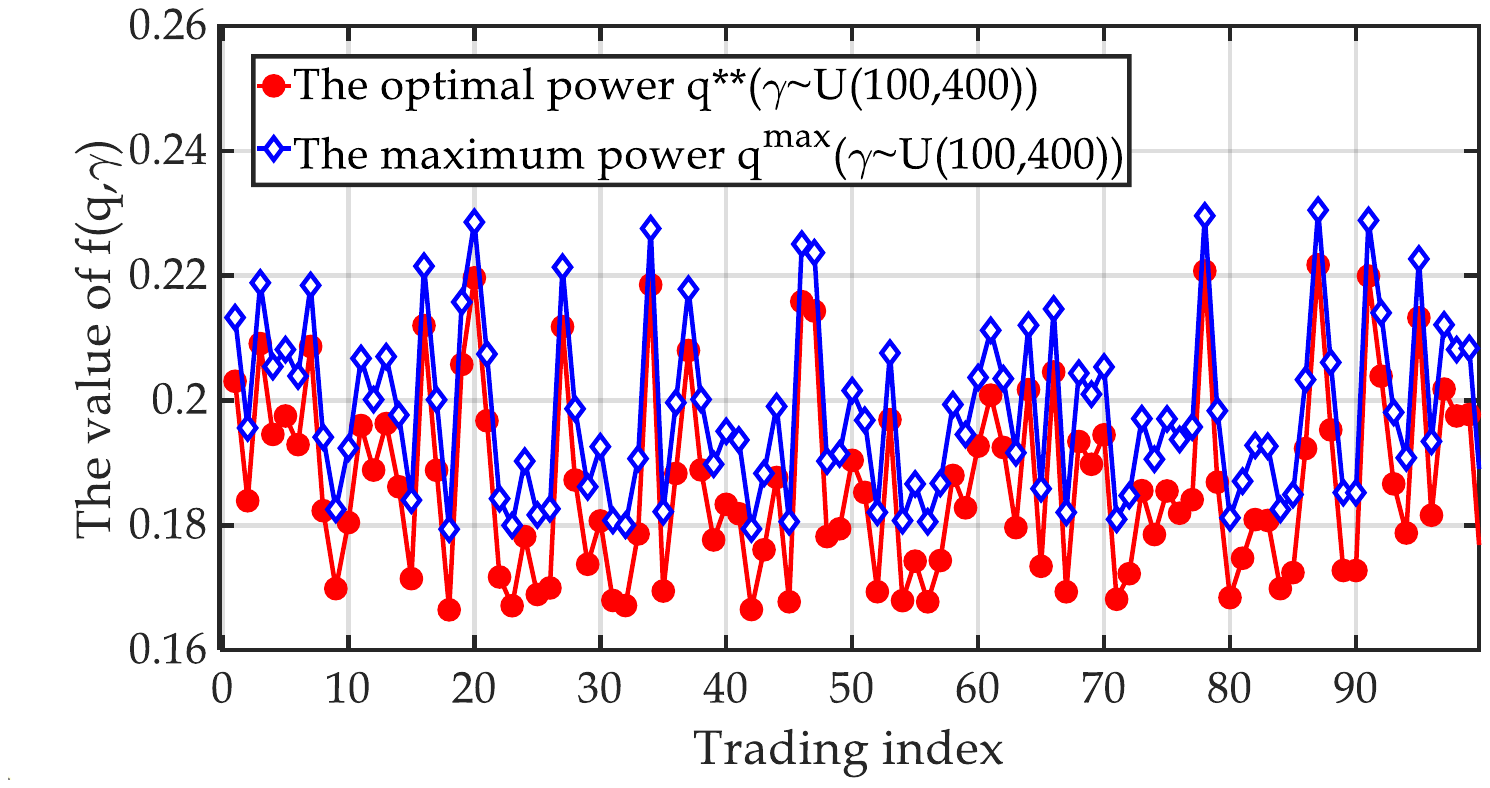}}\hfill
\subfigure[]{\includegraphics[width=.325\linewidth]{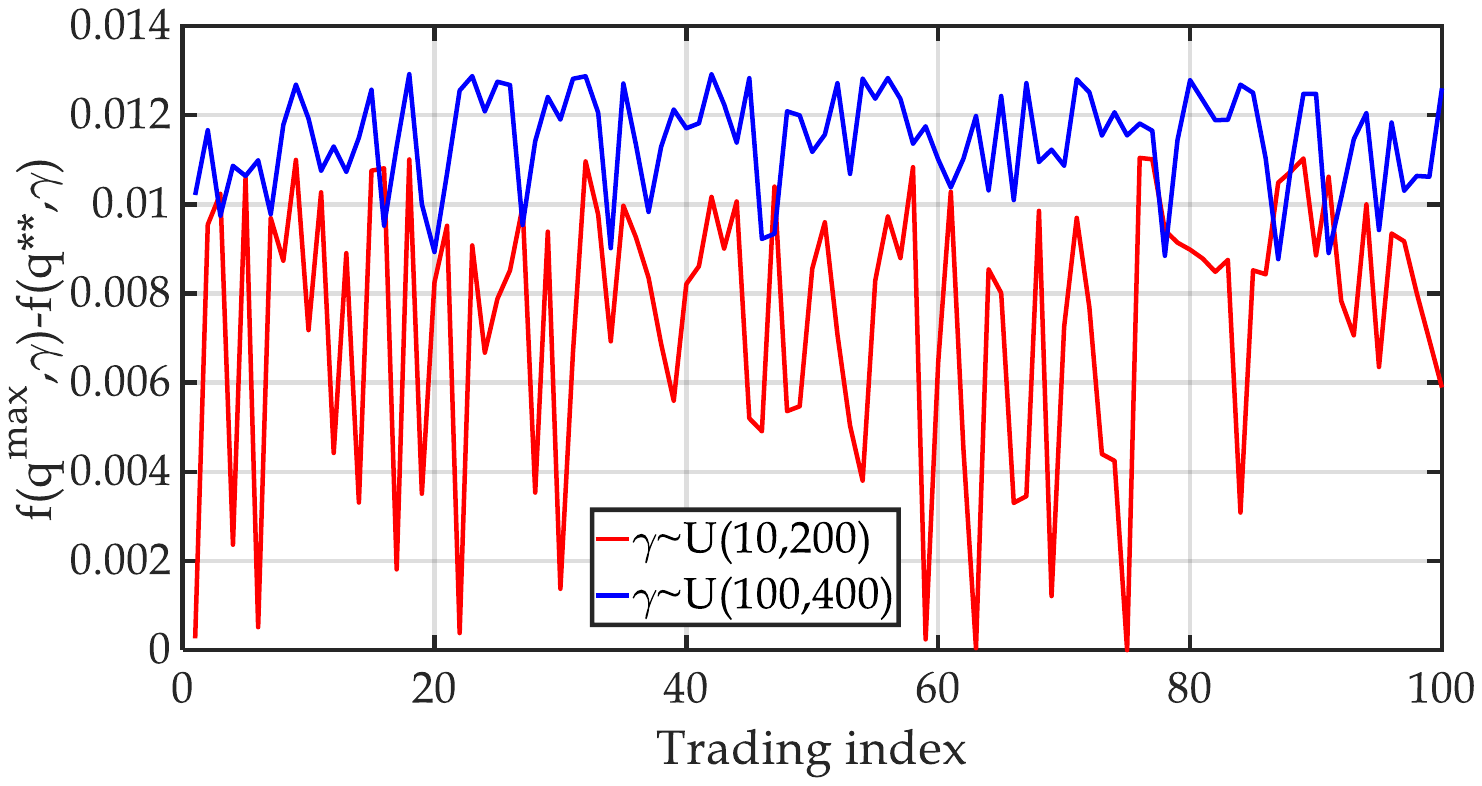}}
\vspace*{-.6em}
\caption{Performance on the value of $f(q, \gamma)$ via applying $q^{**}$ and $q^{max}$.}
\label{fig7}
\end{figure*}

Fig.~5 illustrates the performance of the onsite-based mechanism on terms of negotiation cost and trading failures, upon having various values of $\Delta p$ and $\tau ^b$ under 10{,}000~trading. For analytical simplicity, we apply a 10-based logarithm representation in Fig.~5(a) since the gap between the average negotiation cost (ANC) per trading may be excessively large under different $\Delta p$ and $\tau ^b$. As shown in Fig.~5(a), the ANC declines as the value of $\Delta p$ increases owing to a smaller number of quotation rounds. On the contrary, the rising value of $\tau^b$ leads to growth in ANC because the buyer enjoys faster task execution thanks to trading. In particular, a buyer with worse computational capability will be more willing to pay for computing service, which carries heavy onsite negotiation costs and latency according to the preceding evaluation. For example, on average, onsite players undergo 158 rounds and 2{,}000 rounds of quotation to reach a trading consensus when $\Delta p=0.01$, $\tau ^b=0.8$ and $\Delta p=0.001$, $\tau ^b=1.6$, respectively. Fig.~5(b) demonstrates the impact of varying $\Delta p$ and $\tau ^b$ on trading failures. Different from Fig.~5(a), the total number of trading failures grows as the value of $\Delta p$ increases; that is, the buyer may not be able to afford the price under current circumstances (e.g., a small value of $\gamma $), which leads to negative utility. Moreover, a rising $\tau^b$ will greatly reduce unexpected trading failures due to the buyer's restricted computational capability; in other words, buying computing services from the seller may bring the buyer greater benefits than local computing. As such, onsite players can achieve lower negotiation cost and latency with the decrease of $\Delta p$, which, however, will bring them undesired trading failures. Additionally, an onsite buyer with restricted computational capability will suffer from heavy negotiation latency and cost, which poses challenges to the power- and battery-constrained UAV. Onsite trading mechanisms face challenges to achieve a considerable trade-off between negotiation cost and failures, as well as mutually beneficial player utility (see Figs.~3-4), particularly in wireless communication environments with constrained mobile users.

The performance evaluation of the proposed power optimization algorithm is presented in Fig.~6, comparing the proposed optimal power $q^{**}$ and maximum power $q^{max}$ ($q^{max}=1000 $mW). The value of $f(q, \gamma)$ is considered to illustrate the advantages of the proposed algorithm, according to Section~IV-B. Fig.~6(a) compares the value of $f(q, \gamma)$ relative to $\gamma \sim {\rm U}(10{,}200)$ (where the relevant SNR reaches from roughly 10~dB to 23~dB). In this circumstance, the A2G communication link may be at risk of poor channel quality, which forces the buyer to raise the transmission power and thus leads to small gaps between the values of $f(q^{**}, \gamma)$ and $f(q^{max}, \gamma)$. Within a better communication environment as indicated by $\gamma \sim {\rm U}(100{,}400)$ (where the relevant SNR spans from roughly 20~dB to 26~dB) in Fig.~6(b), the distance between curves $f(q^{**}, \gamma)$ and $f(q^{max}, \gamma)$ expands because the buyer enjoys desirable channel quality in most trading. In particular, the proposed power optimization algorithm realizes greater buyer utility compared to applying $q^{max}$. 
Fig.~6(c) investigates the difference between $f(q^{**}, \gamma)$ and $f(q^{max}, \gamma)$ associated with Fig.~6(a) (see the red curve in Fig.~6(c)), and Fig.~6(b) (see the blue curve in Fig.~6(c)). These results support the advantage of the proposed power optimization algorithm upon having varying A2G channel qualities. 

\section{Conclusion}
To resolve the challenges of trading failures and unfairness, as well as the negotiation latency and cost associated with the onsite trading, we develop a fast futures-based resource trading mechanism under EC-UAV architecture. In this paper, a UAV (buyer) and an MEC server (seller) sign a mutually beneficial and risk-tolerable forward contract in advance to be fulfilled in the future. The mechanism addresses two key problems: first, the contract design is formulated as a MOO problem, for which we propose an efficient bilateral negotiation scheme to facilitate both players reaching a consensus on the amount of resources and the relevant price. Then, the power optimization problem is investigated to maximize the buyer's utility during each trading; specifically, a practical optimization algorithm is introduced via applying convex optimization techniques. Simulation results demonstrate that the proposed futures-based trading mechanism outperforms the onsite trading mechanism on significant indicators, while achieving mutually beneficial utility for the seller and the buyer. It is interesting to consider multiple buyers and more factors about  the unpredictable nature of the resource trading system, as well as smart forward contract, which will be investigated in our future work.

\appendices

\section*{Appendix}
\vspace{-0.2in}
\noindent
\subsection{Derivation of seller's expected utility $\overline{\mathcal{U}^s}(n_l, \mathcal{A}, \mathcal{P})$}

\noindent
Apparently, we have ${\rm E}[\mathcal{A}\mathcal{P}]=\mathcal{A}\mathcal{P}$ and ${\rm E}[U^s]=\frac{p_lM}{2}$, where ${\rm E}[\cdot]$ indicates the notation of expectation. As for ${\rm E}[C^s]$, we discuss the following cases according to the discrete values of $n_l$.

\noindent
$\bullet$ \textit{Case 1} $(\mathcal{A}=V=1 )$: the PMF of $C^s$ is shown in (26), based on which, we have ${\rm E}[C^s]=\frac{Mr_l}{M+1}$. 
\begin{align}
\label{eq26}
\Pr (C^s=k) =
\begin{cases}
{1}/{(M+1)}, & k=0 \vspace{-2ex} \\ 
{M}/{(M+1)}, & k=r_l 
\end{cases}
\end{align}

\noindent
$\bullet$ \textit{Case 2} $(\mathcal{A}=1, V>1)$: we have the PMF of $C^s$ as (27), based on which, ${\rm E}[C^s]$ is calculated by ${\rm E}[C^s]=\frac{Mr_l-Vr_l+r_l}{M+1}$.
\begin{align}
\label{eq27}
{\Pr (C^s=k)} =
\begin{cases}
{V}/{(M+1)}, & k=0 \vspace{-2ex} \\ 
{(M-V+1)}/{(M+1)}, & k=r_l 
\end{cases}
\end{align}

\noindent
$\bullet$ \textit{Case 3} ($1<\mathcal{A}=V$): the PMF of $C^s$ is given by (28), and we have ${\rm E}[C^s]=\frac{-V^2r_l+2VMr_l+Vr_l}{2(M+1)}$.
\begin{align}
\label{eq28}
& {\Pr (C^s=k)}=
\begin{cases}
{1}/{(M+1)}, & k=0 \vspace{-2ex} \\ 
{1}/{(M+1)}, & k\in \{r_l, 2r_l, \cdots, (V-1) r_l\} \vspace{-2ex}  \\
{(M-V+1)}/{(M+1)}, & k=Vr_l 
\end{cases} 
\end{align}

\noindent
$\bullet$ \textit{Case 4} $(1<\mathcal{A}<V)$: the PMF of $C^s$ is shown below, 
\begin{align}
\label{eq29}
& {\Pr (C^s=k)} =
\begin{cases}
{(V-\mathcal{A}+1)}/{(M+1)}, & k=0 \vspace{-2ex} \\ 
{1}/{(M+1)}, & k\in \{r_l, {2r}_l, (\mathcal{A}-1) r_l\} \vspace{-2ex} \\ 
{(M-V+1)}/{(M+1)}, & k=\mathcal{A}r_l
\end{cases}. 
\end{align}

Thus, the calculation of ${\rm E}[C^s]$ is given by (30):
\begin{align}
\label{eq30}
& {\rm E}[C^s]=\frac{V-\mathcal{A}+1}{M+1}\times 0+\sum^{k=(\mathcal{A}-1) r_l}_{k=r_l}{\frac{k}{M+1}}+\frac{r_l\mathcal{A}(M-V+1)}{M+1}\\ \notag
&=\frac{r_l\mathcal{A}^2}{2(M+1)}+\frac{(r_l+2Mr_l-2Vr_l)\mathcal{A} }{2(M+1)}.
\end{align}

Apparently, (30) works for any $\mathcal{A} \in \{1,2,...,V\}$, and $\overline{\mathcal{U}^s}(n_l, \mathcal{A}, \mathcal{P})$ can thus be calculated by (3).

\subsection{Derivation of seller's risk $\mathcal{R}^s(n_l, \mathcal{A}, \mathcal{P})$}

\noindent
According to (6), let random variables $S_1=n_lp_l$, $S_2=n_lp_l-{n_lr}_l+r_l(V-\mathcal{A})$, and $S_3=n_lp_l-r_l\mathcal{A}$, we compute the PMF of $S$ as given in (31).
\begin{align}
\label{eq31}
& \Pr (S=k)= 
\begin{small}
\begin{cases}
\Pr (S_1=k')={1}/{(M+1)},~ k'\in \{0, p_l, 2p_l, \cdots, (V-\mathcal{A}) p_l\} \vspace{-2ex} \\ 
\Pr (S_2=k'')={1}/{(M+1)},~k''\in \{(V-\mathcal{A}) p_l+p_l-r_l,\cdots, (V-\mathcal{A}) p_l+\mathcal{A}(p_l-r_l)\} \vspace{-2ex} \\
\Pr (S_3=k''')={1}/{(M+1)},~k'''\in \{(V+1) p_l-r_l\mathcal{A}, \cdots, Mp_l-r_l\mathcal{A}\} 
\end{cases}
\end{small}\notag \\
& ={1}/{(M+1)}, k\in \{0, p_l, \cdots, (V-\mathcal{A}) p_l, (V-\mathcal{A}) p_l+p_l-r_l, \cdots, (V-\mathcal{A}) p_l+\mathcal{A}(p_l-r_l),\notag \vspace{-2ex} \\
& \cdots, (V+1) p_l-r_l\mathcal{A}, \cdots, Mp_l-r_l\mathcal{A}\}
\end{align}

Correspondingly, the risk of the seller is calculated by (7) based on the CDF of $S$.
\subsection{Derivation of buyer's expected utility $\overline{\mathcal{U}^b}(q, \gamma, n_b,$ $\mathcal{A}, \mathcal{P})$}

\noindent
Let random variable $X={[\mathcal{A}, n_b]}^-$, we consider the following two cases.

\noindent
$\bullet$ \textit{Case 1} ($\mathcal{A}>1 $): we discuss the following conditions. When $x<1$, we have CDF ${\rm F}_X(x)=\Pr (X\le x)=0 $; when $x>\mathcal{A}$, we have ${\rm F}_X(x)=1$, and $\Pr (X\le x)=0 $; when $x=\mathcal{A}$, we have ${\rm F}_X(x)=1$, and $\Pr (X\le x)=\frac{N-\mathcal{A}+1}{N}$; and when $1\le x<\mathcal{A}$, we have $\Pr (X=x)=\frac{1}{N}$, and the relevant ${\rm F}_X(x)$ is calculated as (32).
\begin{align}
\label{eq32}
{\rm F}_X(x)& =1-{\Pr (X>x)}=1-\Pr (n_b>x)=\Pr (n_b\le x)= F_{n_b}(x)=\frac{x}{N} 
\end{align}
Consequently, we have the CDF of random variable $X$ as:
\begin{align}
\label{eq33}
{\rm F}_X(x)=
\begin{cases}
0, & x<1 \vspace{-1.2ex} \\ 
\dfrac{x}{N}, & x\in \{1, \cdots, \mathcal{A}-1\} \vspace{-1.2ex} \\ 
\vspace{-0.05in}
1, & x\ge \mathcal{A} 
\end{cases}, 
\end{align}
The PMF of $X$ is calculated below when $1< \mathcal{A}\le V$: 
\begin{align}
\label{eq34}
\Pr (X=x)=
\begin{cases}
\dfrac{1}{N}, & x\in \{1, \cdots, \mathcal{A}-1\} \vspace{-1.2ex} \\[2.5mm] 
\dfrac{N-\mathcal{A}+1}{N}, & x=\mathcal{A} \vspace{-1.2ex} \\ 
0, & {\text{otherwise}} 
\end{cases}. 
\end{align}
\noindent
$\bullet$ \textit{Case 2} $(\mathcal{A}=1)$: in this case, we have $ X=1$, and the CDF and PMF of $X$ are thus calculated as (35) and (36):
\begin{align}
\label{eq35}
& {\rm F}_X(x)=
\begin{cases}
0, & x<1 \vspace{-2ex} \\ 
1, & x\ge 1 
\end{cases}, \\
\label{eq36}
& \Pr (X=x)=
\begin{cases}
1, & x=1 \vspace{-2ex} \\ 
0, & \text{otherwise} 
\end{cases}. 
\end{align}
Correspondingly, the expected value of $X$ for any $1\le \mathcal{A}\le V$ is calculated by (37).
\begin{align}
\label{eq37}
{\rm E}[X] & =
\begin{cases}
1\times 1, & \mathcal{A}=1 \vspace{-1ex} \\ 
\displaystyle\sum^{x=\mathcal{A}-1}_{x=1}{\dfrac{x}{N}}+\mathcal{A}\times \dfrac{N-\mathcal{A}+1}{N}, & 1< \mathcal{A}\le V 
\end{cases}
=\frac{-\mathcal{A}^2\!+\! (2N\!+\! 1)\mathcal{A}}{2N} 
\end{align}
Let random variable $ Y=\frac{1}{\log _2(1+q\gamma)}$, we discuss the CDF and PDF of $Y$ based on the distribution of $\gamma$, given by (38) and (39), respectively.
\begin{align}
\label{eq38}
& {\rm F}_Y(y)=\Pr \left(\gamma \ge \frac{{2}^{\frac{1}{y}}-1}{q}\right)=
\begin{cases}
0, & y<\dfrac{1}{\log _{2}({q\varepsilon}_{2}+1)} \vspace{-0.8ex} \\ 
1-\dfrac{{2}^{\frac{1}{y}}\!-\! q{\varepsilon}_{1}\!-\! 1}{{q\varepsilon}_{2}\!-\! {q\varepsilon}_{1}}, & \dfrac{1}{\log _{2}({q\varepsilon}_{2}\!+\! 1)}\le y\le \dfrac{1}{\log _{2}({q\varepsilon}_{1}\!+\! 1)} \vspace{-0.6ex} \\ 
1, & y>\dfrac{1}{\log _{2}({q\varepsilon}_{1}+1)} 
\end{cases}\\
\label{eq39}
& {\Pr (Y=y)} =\frac{\partial {\rm F}_Y(y)}{\partial y}
= \begin{cases}
\dfrac{\ln 2}{q{\varepsilon}_{2}-q{\varepsilon}_{1}}\times \dfrac{{2}^{\frac{1}{y}}}{y^{2}}, & y\in \left[\dfrac{1}{\log _{2}(1\!+\! q{\varepsilon}_{2})}, \dfrac{1}{\log _{2}(1\!+\! q{\varepsilon}_{1})}\right] \vspace{-1ex} \\ 
0,  & \text{otherwise} 
\end{cases}
\end{align}

Correspondingly, the expectation of random variable $Y$ is calculated as:
\begin{align}
\label{eq40}
& {\rm E}[Y]={\rm E}\left[\frac{1}{\log _2(1+q\gamma)}\right]=\int^{\frac{1}{\log _2(1+q\varepsilon _1)}}_{\frac{1}{\log _2(1+q\varepsilon _2)}}{y{\Pr (Y=y)} {\rm d}y}
= \frac{\ln 2}{q\varepsilon _2-q\varepsilon _1}\int^{\frac{1}{\log _2(1+q\varepsilon _1)}}_{\frac{1}{\log _2(1+q\varepsilon _2)}}{\left(\frac{2^{\frac{1}{y}}}{y}\right) {\rm d}y} \notag \\
& =\frac{\ln 2}{q\varepsilon _2-q\varepsilon _1}({\rm Ei}(\ln 2\times \log _2(1+q\varepsilon _2))-{\rm Ei}(\ln 2\times \log _2(1+q\varepsilon _1 ))).
\end{align}

Notably, ${\rm Ei}(\cdot)$ indicates the exponential integral function defined by ${\rm Ei}(y)=\int^y_{-\infty} {\frac{e^x}{x}}{\rm d}x$. Let $y_1=\ln 2\times \log _2(q\varepsilon _1+1)$ and $y_2=\ln 2\times \log _2(q\varepsilon _2+1)$ for notational simplicity, (40) is further considered as (41).
\begin{align}
\label{eq41}
{\rm E}[Y]\!=\! \frac{\ln 2\times \left(\int^{y_2}_{-\infty} {\frac{e^x}{x}}{\rm d}x-\int^{y_1}_{-\infty} {\frac{e^x}{x}}{\rm d}x\right)}{q\varepsilon _2-q\varepsilon _1}\!=\! \frac{\ln 2\times \int^{y_2}_{y_1}{\frac{e^x}{x}}{\rm d}x}{q\varepsilon _2-q\varepsilon _1} 
\end{align}

Correspondingly, because the random variables $X$ and $Y$ are independent~of each other, $\overline{\mathcal{U}^b}$ is given by (42):
\begin{align}
\label{eq42}
& \overline{\mathcal{U}^b}(q, \gamma, n_b, \mathcal{A}, \mathcal{P})={\rm E} \left[X\tau ^b-\frac{XD}{W\log _2(1+q\gamma)}\right]-\omega _2{\rm E}\left[\frac{qDX}{W\log _2(1+q\gamma)}\right]-\tau ^s-\omega _1\mathcal{A}\mathcal{P}-\omega _2\ell  \notag  \\ 
&={\rm E}[X] \left(\tau ^b-\frac{D+\omega _2qD}{W}\times {\rm E}[Y]\right)-\tau ^s-\omega _1\mathcal{A}\mathcal{P}-\omega _2\ell. 
\end{align}

Let $\mathbb{C}_2=\tau ^b-\frac{D+\omega _2qD}{W}\times {\rm E}[Y]$ denote a constant under any given $q$ for notational simplicity, $\overline{\mathcal{U}^b}(q, \gamma, n_b, \mathcal{A}, \mathcal{P})$ can be thus calculated as (11).
\subsection{Derivation of buyer's risk $\mathcal{R}^b(q, \gamma, n_b, \mathcal{A}, \mathcal{P})$}
\noindent
Let random variable $X={[\mathcal{A}, n_b]}^-$, $Y=\frac{1}{\log _{2}(1+q\gamma)}$, and $Z=\tau ^b-\frac{D+{\omega}_{2}qD}{W}Y$. Apparently, $X$ and $Z$ are independent of each other. We first discuss the CDF of $Z$ based on (38), as given in (43).
\begin{align}
\label{eq43}
{\rm F}_Z(z) & =1-\Pr \left(Y\le \frac{W(\tau ^b-z)}{D+{\omega}_{2}qD}\right)=
\begin{cases}
0, & z<\mathbb{C}_{4} \vspace{-1.2ex} \\ 
\dfrac{{2}^{\frac{D+{\omega}_{2}qD}{W(\tau ^b-z)}}-q{\varepsilon}_{1}-1}{{q\varepsilon}_{2}-{q\varepsilon}_{1}}, & \mathbb{C}_{4}\le z\le \mathbb{C}'_{4}, \vspace{-1.2ex} \\ 
1, & z> \mathbb{C}'_{4} 
\end{cases}
\end{align}
where $\mathbb{C}_{4}=\tau ^b-\frac{D+{\omega}_{2}qD}{W\log _{2}({q\varepsilon}_{1}+1)}$ and $\mathbb{C}'_{4}=\tau ^b-\frac{D+{\omega}_{2}qD}{W\log _{2}({q\varepsilon}_{2}+1)}$ for notational simplicity. Owing to the~discrete values of $n_b $ and $\mathcal{A}$, the calculation of $\mathcal{R}^b(q, \gamma, n_b, \mathcal{A}, \mathcal{P})$ is discussed via considering the following two cases.

\noindent
$\bullet$ \textit{Case 1} $(\mathcal{A}=1 )$: consider $\mathcal{A}=1$, $\mathcal{R}^b(q, \gamma, n_b, \mathcal{A}=1, \mathcal{P})$ is given by (44).
\begin{align}
\label{eq44}
& \mathcal{R}^b(q, \gamma, n_b, \mathcal{A}=1, \mathcal{P})=\Pr \{Z\le \mathbb{C}'_{3}\}
= \begin{cases}
0, & \mathbb{C}'_{3}<\mathbb{C}_{4} \vspace{-1.2ex} \\ 
\dfrac{{2}^{\frac{D+{\omega}_{2}qD}{W(\tau ^b-\mathbb{C}'_{3})}}-q{\varepsilon}_{1}-1}{{q\varepsilon}_{2}-{q\varepsilon}_{1}}, & \mathbb{C}_{4}\le \mathbb{C}'_{3}\le \mathbb{C}'_{4} \vspace{-1.2ex} \\ 
1, & \mathbb{C}'_{3}>\mathbb{C}'_{4} 
\end{cases}
\end{align}

\noindent
$\bullet$ \textit{Case 2} $(\mathcal{A}>1 )$: consider $\mathcal{A}>1$, we have $\mathbb{C}_{3}=XZ$. Correspondingly, by analyzing the distribution of the product of discrete and continuous random variables, we calculate $\mathcal{R}^b(q, \gamma, n_b, \mathcal{A}>1, \mathcal{P})$ as (45) based on (33), (34), and (43). 
\begin{align}
\label{eq45}
& \mathcal{R}^b(q, \gamma, n_b, \mathcal{A}>1, \mathcal{P})=\Pr \{\mathbb{C}_{3}\le \mathbb{C}'_{3}\}=\Pr \{XZ\le \mathbb{C}'_{3}\} =\sum^{x=\mathcal{A}}_{x=1}{\Pr (X=x)\times {\rm F}_Z\left(\frac{\mathbb{C}'_{3}}{x}\right)} \notag \\
& =\frac{1}{N}\times \sum^{x=\mathcal{A}-1}_{x=1}{{\rm F}_Z\left(\frac{\mathbb{C}'_{3}}{x}\right)}+\frac{N-\mathcal{A}+1}{N}\times {\rm F}_Z\left(\frac{\mathbb{C}'_{3}}{\mathcal{A}}\right) 
\end{align}
%
\subsection{Derivation of the optimal power $q^{**}$}

\noindent
The first-order derivative of $h(\beta, \gamma)$ is calculated as (46):
\begin{align}
\label{eq47}
\frac{\partial h(\beta, \gamma)}{\partial \beta} =1+\frac{\omega _2(\beta \times 2^{\frac{1}{\beta}}-\beta -\ln 2\times 2^{\frac{1}{\beta}})}{\gamma \beta}. 
\end{align}

Correspondingly, we have (47) by letting $\frac{\partial h(\beta, \gamma)}{\partial \beta} =0$, $\beta '=\frac{1}{\beta}$ and $\beta''=\beta '\times \ln 2-1$. 
\begin{align}
\label{eq47}
& 2^{\frac{1}{\beta}}-\frac{2^{\frac{1}{\beta}}\ln 2}{\beta} =\frac{\omega _2-\gamma} {\omega _2}\Longrightarrow 
2^{\beta '}-\beta'2^{\beta'}\ln 2=\frac{\omega _2-\gamma} {\omega _2}\Longrightarrow \rm{e}^{(\beta '\ln 2)}\times (1-\beta'\ln 2)=\frac{\omega _2-\gamma} {\omega _2}\notag\\
&\xLongrightarrow{\beta''=\beta'\ln 2-1}-\rm{e}^{(\beta''+1)}\times \beta''=\frac{\omega _2-\gamma} {\omega _2} \Longrightarrow \rm{e}^{\beta''}\times \beta''=\frac{\gamma -\omega _2}{\rm{e}\times \omega _2} \Longrightarrow \beta''=\textbf{W}\left(\frac{\gamma -\omega _2}{\rm{e}\times \omega _2}\right),
\end{align}

\noindent
where $\textbf{W}(\cdot)$ denotes the Lambert W Function~\cite{43}. Thus, we have ${\beta}^*$ shown by (48).
\begin{align}
\label{eq48}
& {\beta}^*= 
\begin{cases}
\dfrac{\ln 2}{\textbf{W}\left(\frac{\gamma -\omega _2}{\rm{e}\times \omega _2}\right) \!+\! 1}, &\text{if } \dfrac{\ln 2}{\textbf{W}\left(\frac{\gamma -\omega _2}{\rm{e}\times \omega _2}\right) \!+\! 1}>\dfrac{1}{\log _2(1 \!+\! \gamma q^{max})} \vspace{-0.5ex} \\[5mm]
\dfrac{1}{\log _2(1+\gamma q^{max})}, & \text{otherwise } 
\end{cases}
\end{align}

Based on (48), the optimal transmission power is thus obtained as given in (24).

\vfill
\end{document}